%%%%%%%%% BEGIN DOCUMENT %%%%%%%%%%%%%%%%%%%%%%%%%%%%%%%%%%
%PREAMBLE
\documentclass[12pt,a4paper]{article}

% --- Math packages ---
\usepackage{amsmath,amsfonts,amssymb,amsthm,mathrsfs}

% --- Page layout (balanced A4 margins) ---
\usepackage[a4paper,top=30mm,bottom=30mm,left=25mm,right=25mm]{geometry}

% --- Title/authors ---
\usepackage[blocks]{authblk}
\usepackage[utf8]{inputenc}
\usepackage[british]{datetime2}
\DTMlangsetup[en-GB]{ord=omit}   % drop Òst/nd/rd/thÓ

% in the title block
\date{\DTMtoday}                 % <-- prints e.g. 22 September 2025

% Optional: tweak affiliation spacing/appearance
\setlength{\affilsep}{0.9em}      % vertical space between affiliations
     % "A and B" instead of comma before last author
%\renewcommand\Affilfont{\itshape\small} % (optional) style for affiliations

% --- Utilities ---
\usepackage[pagewise]{lineno} % optional line numbers
%\linenumbers
\synctex=1

% --- Theorem environments ---
\newtheorem{thm}{Theorem}[section]
\newtheorem{ex}[thm]{Example}
\newtheorem{coro}[thm]{Corollary}

\newtheorem{defn}[thm]{Definition}
\newtheorem{rem}[thm]{Remark}
\newtheorem{prop}[thm]{Proposition}

%%%%%%%%%%%

\newcommand{\Schrodinger}{{Schr\"odinger }}

\newcommand{\cadlag}{{c\`adl\`ag }}
% --- Common macros (note: \1 needs dsfont if you use it) ---
% \usepackage{dsfont} % uncomment if you actually use \1
 \def\1{\mathds{1}}
\def\R{\mathbb{R}}

 % (fixed: was \mathbb{P})
%\def\E{\mathbb{M}} % (fixed: was \mathbb{M}) 
\def\M{\mathbb{M}}
\def\N{\mathbb{N}}

\def\T{\mathbb{T}}

\def\dd{\mathrm{d}}

\def\law{\overset{\textnormal{law}}{=}}

\def\F{\mathscr{F}}
\def\G{\mathscr{G}}

%%%%%%%%%%%%%%%%%%%%%%%%%%%%%%%%%%%%%%%%%%%%%%%%%%%%%%%%%%%

%TITLE AND AFFILIATIONS
\title{Martingale Projections and Quantum Decoherence}

\author[1]{Lane P.~Hughston}
\author[2,3]{Levent A.~Meng\"ut\"urk}

\affil[1]{School of Computing, Goldsmiths University of London\par
New Cross, London SE14 6NW, UK}

\affil[2]{Department of Mathematics, University College London\par
25 Gordon Street, London 
WC1H 0AY, UK}

\affil[3]{Artificial Intelligence and Mathematics Research Lab\par
James Carter Road, Mildenhall, Bury St Edmunds IP28 7DE, UK}

\date{\DTMtoday}

\begin{document}
\maketitle
%%%%%%%%%%%%%%%%%%%%%%%%%%%%%%

%ABSTRACT
\begin{abstract}
\noindent
We introduce so-called super/sub-martingale projections as a family of endomorphisms defined on unions of Polish spaces. Such projections allow us to identify martingales as collections of transformations that relate path-valued random variables to each other under conditional expectations. In this sense, super/sub-martingale projections are random functionals that (i) are boundedness preserving and (ii) satisfy a conditional expectation criterion similar to that of the classical martingale theory. As an application to the theory of open quantum systems, we prove (a) that any system-environment interaction that manifests a supermartingale projection on the density matrix gives rise to decoherence, and (b) that any system-environment interaction that manifests a submartingale projection gives rise an increase in Shannon-Wiener information. It follows (c) that \emph{martingale} projections in an open quantum system give rise both to quantum decoherence and to information gain.
\end{abstract}

%{\bf Keywords:} Quantum mechanics, entropy, information, system, environment, decoherence, martingales, supermartingales, submartingales.

{\bf Keywords:} Quantum mechanics, quantum information, entropy,  environment,  open quantum systems, quantum decoherence, martingales, supermartingales, submartingales.

%INTRODUCTION
\section{Introduction}
Most physical, biological and social systems observed in nature can be understood in terms of (a) an environment, (b) a set of admissible actions taking place in that environment, and (c) the set of all possible states formed as a manifestation of these actions. In essence, it is viable to consider any such dynamical system as some form of open system that hosts environmental interactions and transformations, which, in turn, evolve and shape the system as a consequence of transmissions. In light of this, we focus on quantum systems as open systems consisting of states and actions on those states. 
More specifically, we shall study \emph{decoherence}, a well-known physical phenomenon that involves the reduction of the off-diagonal coordinates of a quantum density matrix in the basis of an observable as a result of internal and external interactions \cite{Zeh 1970, Joos Zeh 1985,   Joos et al 2003, Schlosshauer 2004, Zurek 2003 RMP,  Schlosshauer 2005, Schlosshauser 2010,  Zurek 2022}. Decoherence can be regarded as a fundamental aspect of the quantum theory of open systems \cite{Davies 1976, Gisin Percival 1992,  Breuer Petruccione 2004, Rivas Huelga 2012}. We highlight that quantum decoherence has found applications in quantum information processing \cite{Shabani Lidar 2009, Breuer Amato Vacchini 2018, Zurek 2025} and quantum computing \cite{ Preskill 1998, Zurek 2003 PRL, Shor 1995} and has connections to quantum thermodynamics \cite{Gemmer et al 2009, Popescu et al 2006, Esposito et al 2009}. Direct empirical insights into the nature of decoherence have been gained by the use of controlled systems such as trapped ions, superconducting qubits, and cavity QED \cite{Monroe et al 1996, Brune et al 1996, Myatt et al 2000}. Although quantum decoherence has generally been interpreted, at least informally, as information loss due to the flow of information from a system into its environment \cite{Zurek 1991, Zurek 2009}, it has recently been shown in \cite{Brody Hughston 2025} that the opposite holds: decoherence implies information \emph{gain}. 

The work of reference \cite{Brody Hughston 2025} motivates the present investigation, in which we take the view that quantum decoherence is a fundamentally important concept that provides deeper insights into the state evolution of a quantum system that interacts with its environment. The original contributions of our work here are two-fold: (i) the introduction of what we call super/sub-martingale projections as random endomorphisms of path-valued random variables on unions of Polish spaces in a broad mathematical setting, and (ii) the proof of a fundamental connection of super/sub-martingale projections with quantum decoherence and Shannon-Wiener information. The motivation behind generalizing classical super/sub-martingale processes to super/sub-martingale projections on path spaces is to model possibly segmented, discrete or event-triggered system-environment interactions rather than asking global conditions on the relevant stochastic processes. More specifically, super/sub-martingale projections -- and therefore martingale projections -- bring the flexibility to compartmentalize random system-environment interactions across time, instead of asking such features from the \emph{entire} evolution even if not existent on a continual basis, which will be important when investigating quantum decoherence.

One of the most highly developed mathematical foundations for the enrichment of  our understanding of quantum decoherence is quantum stochastic calculus, particularly in connection with its applications in describing unitary dilations of open quantum systems through quantum stochastic differential equations \cite{Hudson Parthasarathy 1982, Hudson Parthasarathy 1984, Parthasarathy Sinha 1986, Parthasarathy 1992, Meyer 1993, Fagnola 1994, Fagnola Wills 2003, Chang 2015}. In this active research field, a fundamental structure is quantum noise, which is an operator-valued stochastic process defined on a given interacting environment (e.g. bosonic field) represented by a Fock space. Consequentially, the Fock space produces a natural quantum It\^o formalism that enables quantum decoherence to be derived dynamically through quantum master equations. Our framework differs from this approach, since we do not bring forward (quantum) stochastic calculus, but instead work with path-valued random variables and their transformations. If the considered paths over a chosen Polish space turn out to be solutions to stochastic differential equations, then path transformations become implicitly linked to such dynamical representations (we shall provide an example later in the paper), but we do not need to surface them in our study of quantum decoherence. In fact, this flexibility is one of our core motivations to introduce super/sub-martingale projections as mentioned above. In addition, since Fock space is a Hilbert space completion of the direct sum of symmetric/antisymmetric tensor powers of a separable Hilbert space, it is a Polish space when equipped with its natural norm topology. Accordingly, operator-valued stochastic processes from quantum stochastic calculus are hosted as path-valued random variables in our broad mathematical setting, which we find stimulating as a dedicated study in future. Nonetheless, we shall note here that by choosing the Polish space in our framework as a Fock space, one can expect to generalise quantum martingales through martingale projections over the given Fock space.

Although we believe super/sub-martingale projections could merit a dedicated study in their own right, we adopt a parsimonious presentation of these functionals towards the study of quantum decoherence. Accordingly, we prove (a) that the existence of a supermartingale projection on the diagonal magnitudes of a density matrix implies quantum decoherence, and (b) that the existence of a submartingale projection on the diagonal magnitudes of a density matrix implies increasing Shannon-Wiener information, under conditional expectations. Hence, (c) if a system-environment interaction manifests a martingale projection on the density matrix of an open quantum system, one can expect both decoherence and information gain within that system. 

The mathematical framework of the paper is structured as follows. In Sections 2 and 3 we introduce super/sub-martingale projections as a family of endomorphisms on Polish spaces of random variables. Thereafter, we continue in Section 4 within a quantum framework, where we study decoherence and Shannon-Wiener information in terms of super/sub-martingale projections. Our conclusions are summarized in Section 5 and in the Appendix we provide some examples of the structures under consideration, as well as a table of key mathematical symbols used in the paper to aid the reader.

%PATH TRANSFORMATIONS
\section{Path Transformations}
We work with a probability space $(\Omega,\F,\nu)$ and a nonempty compact set $\T\subset\R_+$. In essence, we ask $\T$ to be a totally-ordered set and we choose its superset $\R_+ = [0,\infty)$ canonically. Accordingly, any element of $\T$ can be interpreted as a point in \emph{time}; hence, $\T$ and any of its subsets can be taken to represent time-frames. Hereafter, $\triangleq$ stands for ``equals to by definition". We define a sequence of nonempty subsets
\begin{align}
\T_{[r,t]} \,\triangleq \, [r,t] \,\bigcap \,\T \hspace{0.1in} \text{for any $r\leq t\in\T$} \,\,\,\text{\&}\,\,\, \T_{[r,t]} \neq \emptyset,
\end{align}
where $[r,t] = \{s \,:\, r \leq s \leq t\}$ is a closed interval. Note that 
\begin{align}
\emptyset \subset \T_{[r,s]} \subseteq \T_{[r,t]} \subseteq \T, \notag 
\end{align}
for any $r\leq s\leq t \in \T$. Also, for any $r\leq t \in \T$, we have 
\begin{align}
\inf(\T_{[r,t]}) = r \,\,\, \text{and} \,\,\, \sup(\T_{[r,t]}) = t. \notag
\end{align}
\begin{rem} {\em
If $\rho = \inf(\T)$ and $\tau = \sup(\T)$, then $\T_{[\rho,\tau]} = \T$. Also, note that $\T_{[t,t]} = \{t\}$ for any $t\in\T$.}
\end{rem}
Let $\mathbb M$ be a Polish space -- that is to say, a topological space that is separable and completely metrizable (e.g. $\mathbb{R}^n$ for any finite $n\geq1$). Let $(\mathbb M,\mathscr{B}(\mathbb M))$ be its Borel space. Since $\T_{[r,t]}$ is a Polish space for any $r\leq t\in\T$, the product $\mathbb M\times\T_{[r,t]}$ is also a Polish space -- hence, $(\mathbb M\times\T_{[r,t]},\mathscr{B}(\mathbb M\times\T_{[r,t]})$ is a Borel space. For what follows, we are interested in $(\mathbb M\times\T_{[r,t]})$-valued random variables for any $r\leq t\in\T$. 
Note that each $(\mathbb M\times\T_{[r,t]})$ can be considered a space of \emph{paths} for any $r\leq t\in\T$, where any $\mathbb M$-valued random variable collected over a time-frame $\T_{[r,t]}$ is itself a random variable that can be viewed as a trajectory in $(\mathbb M\times\T_{[r,t]})$.
Finally, we define the double-union space
\begin{align}
\boldsymbol{\Lambda}(\mathbb M \,, \T) = \bigcup_{t\in\T} \,\, \bigcup_{r\in \T \, \bigcap \, r \leq t} \left(\mathbb M\times\T_{[r,t]}\right),
\end{align}
which need \emph{not} be a Polish space -- we recall that unions of Polish spaces are not neccesarily Polish. Note that $\boldsymbol{\Lambda}(\mathbb M \,, \T)$ is itself a path space, where any $\mathbb M$-valued random variable collected over any $\T^*\subseteq\T$ is a path in $\boldsymbol{\Lambda}(\mathbb M \,, \T)$. 

Henceforth, let $X_t$ be an $\mathbb M$-valued random variable for any fixed $t\in\T$. Accordingly, we write the collection over any $\T^*\subseteq\T$:
\begin{align}
\boldsymbol{X}_{\T^*} \triangleq \{X_t \, : \, \forall t\in\T^*\} \,\, \text{for $\T^*\subseteq\T$} \,\,\,\text{\&}\,\,\, \T^* \neq \emptyset, \label{pathnotation}
\end{align}
where $\T^*\subseteq\T$ is totally ordered. Note that $\boldsymbol{X}_{\T^*}$ in (\ref{pathnotation}) is an element of $(\mathbb M\times\T^*) \subseteq \boldsymbol{\Lambda}(\mathbb M \,, \T)$ for any nonempty $\T^*\subseteq\T$. In other words, $\boldsymbol{X}_{\T^*}$ is an $(\mathbb M\times\T^*)$-valued random variable, and hence, a $\boldsymbol{\Lambda}(\mathbb M \,, \T)$-valued random variable.
\begin{rem}
\label{remone}{\em
If $\T^* = \{t\}$ for fixed $t\in\T$, then $\boldsymbol{X}_{\T^*} = X_t \in \boldsymbol{\Lambda}(\mathbb M \,, \T)$ is a \emph{singleton path}, since we can view it as an $(\mathbb M\times\{t\})$-valued random variable.}
\end{rem}
We are now in position to introduce a family of endomorphisms on $\boldsymbol{\Lambda}(\mathbb M \,, \T)$. 
\begin{defn}
\label{transformationsgeneral} {\em
For any $p\leq r \in \T$ and any $s\leq t \in \T$, let 
\begin{align}
\boldsymbol{\Gamma}^{(p,r)\mapsto(s,t)} \,: \, \boldsymbol{\Lambda}(\mathbb M \,, \T) \rightarrow \boldsymbol{\Lambda}(\mathbb M \,, \T)  
\end{align}
be an endomorphism such that
\begin{align}
\boldsymbol{Y}_{\T_{[s,t]}} = \boldsymbol{\Gamma}^{(p,r)\mapsto(s,t)}\left( \boldsymbol{X}_{\T_{[p,r]}}\right), \label{mapgeneralformula}
\end{align}
with $\boldsymbol{X}_{\T_{[p,r]}} \in \boldsymbol{\Lambda}(\mathbb M \,, \T)$ and $\boldsymbol{Y}_{\T_{[s,t]}} \in \boldsymbol{\Lambda}(\mathbb M \,, \T)$. Then we say $\boldsymbol{\Gamma}^{(p,r)\mapsto(s,t)}$ is a \emph{path transformation} on $\boldsymbol{\Lambda}(\mathbb M \,, \T)$ from $\T_{[p,r]}$ to $\T_{[s,t]}$. }
\end{defn}
We refer the reader to the Appendix for various examples. We shall however provide a simple visualization of a path transformation for motivation, below.
\begin{ex}
\emph{\label{examplesimplebegin}
Set $\mathbb M=\R$ and $\T=[0,3\pi]$. Let $\boldsymbol{X}_{\T_{[0,2\pi]}} = \boldsymbol{X}_{[0,2\pi]}$ be given by $X_t = \sin(t) + W_t$ for every $t\in[0,2\pi]$, where $\{W_t\}_{t\in[0,2\pi]}$ is an $\R$-valued Gaussian process under $\nu$. Define a path transformation
\begin{equation}
    \boldsymbol{Y}_{[\pi,3\pi]} = \boldsymbol{\Gamma}^{(0,2\pi)\mapsto(\pi,3\pi)}\left(\boldsymbol{X}_{[0,2\pi]}\right) =
\begin{cases}
			X_t, & \text{if}\ t\in [\pi,2\pi) \\
      X_{2\pi} + \sin(t) + \epsilon\mathbf{1}(\tau \leq t), & \text{if}\ t\in[2\pi,3\pi],
\end{cases}
\end{equation}
where $\epsilon: \Omega \rightarrow \R $ and $\tau: \Omega \rightarrow \R_+$ are $\F$-measurable random variables, with $\mathbf{1}(.)$ the indicator function such that $\mathbf{1}(\tau \leq t)$ equals 1 if $\tau \leq t$ and 0 otherwise.}
\end{ex} 
We emphasize that path transformations in Definition \ref{transformationsgeneral} can generate random transformations -- this can also be seen in Example \ref{examplesimplebegin} through the appearance of random variables $\epsilon$ and $\tau$. More specifically, define a subalgebra $\G \subset \F$. Even if $\boldsymbol{X}_{\T_{[p,r]}} \in \boldsymbol{\Lambda}(\mathbb M \,, \T)$ is $\G$-measurable, the random variable $\boldsymbol{Y}_{\T_{[s,t]}}\in \boldsymbol{\Lambda}(\mathbb M \,, \T)$ in (\ref{mapgeneralformula}) need not be $\G$-measurable. In addition, the statement below highlights that compositions of path transformations are path transformations.
\begin{prop}
Let $\mathcal{I} = \{1,\ldots,K\}$ be an ordered set for an integer $K\geq 2$ and let $\boldsymbol{\Gamma}^{\left(\alpha^{(i)},\beta^{(i)}\right)\mapsto\left(\alpha^{(i+1)},\beta^{(i+1)}\right)}$ be a path transformation for every $i\in\mathcal{I}\setminus\{K\}$. Then, for any $\alpha^{(i)}\leq \beta^{(i)} \in \T$ for $i\in\mathcal{I}$, the composition
\begin{align}
\boldsymbol{\Gamma}^{\left(\alpha^{(1)},\beta^{(1)}\right)\mapsto\left(\alpha^{(K)},\beta^{(K)}\right)} = \boldsymbol{\Gamma}^{\left(\alpha^{(1)},\beta^{(1)}\right)\mapsto\left(\alpha^{(2)},\beta^{(2)}\right)} \,\, \circ \,\, \cdots \,\, \circ \,\, \boldsymbol{\Gamma}^{\left(\alpha^{(K-1)},\beta^{(K-1)}\right)\mapsto\left(\alpha^{(K)},\beta^{(K)}\right)} \label{compositionmapped}
\end{align} 
is a path transformation on $\boldsymbol{\Lambda}(\mathbb M \,, \T)$ from $\T_{[\alpha^{(1)},\beta^{(1)}]}$ to $\T_{[\alpha^{(K)},\beta^{(K)}]}$.
\end{prop}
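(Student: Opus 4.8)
The plan is to proceed by induction on $K$, using the fact that a composition of two endomorphisms on a fixed set is again an endomorphism on that set, together with the defining property \eqref{mapgeneralformula} of a path transformation. The base case $K=2$ is immediate: $\boldsymbol{\Gamma}^{(\alpha^{(1)},\beta^{(1)})\mapsto(\alpha^{(2)},\beta^{(2)})}$ is a path transformation by hypothesis. For the inductive step, suppose the composition of the first $K-2$ maps,
\[
\boldsymbol{\Psi} \triangleq \boldsymbol{\Gamma}^{(\alpha^{(1)},\beta^{(1)})\mapsto(\alpha^{(2)},\beta^{(2)})} \circ \cdots \circ \boldsymbol{\Gamma}^{(\alpha^{(K-2)},\beta^{(K-2)})\mapsto(\alpha^{(K-1)},\beta^{(K-1)})},
\]
is a path transformation from $\T_{[\alpha^{(1)},\beta^{(1)}]}$ to $\T_{[\alpha^{(K-1)},\beta^{(K-1)}]}$; then I would show that $\boldsymbol{\Psi} \circ \boldsymbol{\Gamma}^{(\alpha^{(K-1)},\beta^{(K-1)})\mapsto(\alpha^{(K)},\beta^{(K)})}$ is a path transformation from $\T_{[\alpha^{(1)},\beta^{(1)}]}$ to $\T_{[\alpha^{(K)},\beta^{(K)}]}$. (One should be slightly careful about the order in which the composition is read off against the order of $\mathcal I$, but this is purely bookkeeping.)

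The two things to verify for the composite are exactly the two clauses of Definition \ref{transformationsgeneral}: that it is an endomorphism of $\boldsymbol{\Lambda}(\mathbb M,\T)$, and that it sends a $\boldsymbol{\Lambda}(\mathbb M,\T)$-valued random variable supported on $\T_{[\alpha^{(1)},\beta^{(1)}]}$ to one supported on $\T_{[\alpha^{(K)},\beta^{(K)}]}$. The first is automatic, since the composition of two self-maps of $\boldsymbol{\Lambda}(\mathbb M,\T)$ is a self-map of $\boldsymbol{\Lambda}(\mathbb M,\T)$. For the second, I would take any $\boldsymbol{X}_{\T_{[\alpha^{(1)},\beta^{(1)}]}} \in \boldsymbol{\Lambda}(\mathbb M,\T)$, apply $\boldsymbol{\Gamma}^{(\alpha^{(K-1)},\beta^{(K-1)})\mapsto(\alpha^{(K)},\beta^{(K)})}$ — wait, more precisely, apply the innermost map first — and invoke \eqref{mapgeneralformula} at each stage: the output of the $i$-th map lies in $(\mathbb M\times\T_{[\alpha^{(i+1)},\beta^{(i+1)}]}) \subseteq \boldsymbol{\Lambda}(\mathbb M,\T)$, which is precisely the domain on which the $(i+1)$-th map is declared to act, so the chain of applications is well defined, and the final output lies in $(\mathbb M\times\T_{[\alpha^{(K)},\beta^{(K)}]}) \subseteq \boldsymbol{\Lambda}(\mathbb M,\T)$. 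This is what it means for the composite to be a path transformation from $\T_{[\alpha^{(1)},\beta^{(1)}]}$ to $\T_{[\alpha^{(K)},\beta^{(K)}]}$.

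The only genuine subtlety — the closest thing to an obstacle — is making sure the intermediate time-supports match up so that each successive map is being fed an input of the form it expects. Because every $\boldsymbol{\Gamma}$ in the statement already has its source and target time-windows prescribed consistently along the chain (the target $(\alpha^{(i+1)},\beta^{(i+1)})$ of the $i$-th map is the source of the $(i+1)$-th), this matching is built into the hypotheses; one simply notes that each $(\mathbb M\times\T_{[r,t]})$ sits inside $\boldsymbol{\Lambda}(\mathbb M,\T)$ by construction, so nothing leaves the common domain. No nonemptiness or totally-ordered issues arise beyond those already guaranteed for the $\T_{[r,t]}$ in Section 2. Hence the induction closes and the claim follows.
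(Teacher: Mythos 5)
Your argument is correct and is essentially the paper's own proof written out in full: the paper simply observes that the claim follows directly from Definition \ref{transformationsgeneral} because path transformations are endomorphisms of $\boldsymbol{\Lambda}(\mathbb M,\T)$ parametrized over ordered pairs in $\T$, which is exactly the content of your induction. Your extra care about the composition order and the matching of intermediate time-windows is sound bookkeeping but adds nothing beyond what the definition already guarantees.
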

\begin{proof}
Since $\boldsymbol{\Gamma}^{\left(\alpha^{(i)},\beta^{(i)}\right)\mapsto\left(\alpha^{(i+1)},\beta^{(i+1)}\right)}$ is a path transformation, it is an endomorphism on $\boldsymbol{\Lambda}(\mathbb M \,, \T)$ for every $i\in\mathcal{I}\setminus\{K\}$ from Definition \ref{transformationsgeneral}. Composition of any two endomorphisms is itself an endomorphism, and thus, $\boldsymbol{\Gamma}^{\left(\alpha^{(1)},\beta^{(1)}\right)\mapsto\left(\alpha^{(K)},\beta^{(K)}\right)}$ is an endomorphism on $\boldsymbol{\Lambda}(\mathbb M \,, \T)$. Since every $\alpha^{(i)}\leq \beta^{(i)} \in \T$ is an ordered pair in $\T$, the chain of compositions in (\ref{compositionmapped}) must be a path transformation from $\T_{[\alpha^{(1)},\beta^{(1)}]}$ to $\T_{[\alpha^{(K)},\beta^{(K)}]}$, using (\ref{mapgeneralformula}).
\end{proof}
In Definition \ref{transformationsgeneral}, if we set $s=t$ in (\ref{mapgeneralformula}), we can write
\begin{align}
\boldsymbol{Y}_{\T_{[t,t]}} = \boldsymbol{\Gamma}^{(p,r)\mapsto(t,t)}\left( \boldsymbol{X}_{\T_{[p,r]}}\right) = Y_t, \label{projectionformula}
\end{align}
using Remark \ref{remone}. Hence, $\boldsymbol{\Gamma}^{(p,r)\mapsto(t,t)}$ can be viewed as a \emph{projection} on $\boldsymbol{\Lambda}(\M \,, \T)$ for any $p\leq r \in \T$ and any $t \in \T$. In addition, if also $p=r$, then we can write
\begin{align}
Y_t = \boldsymbol{\Gamma}^{(r,r)\mapsto(t,t)}\left( X_r \right), \label{projectionformulatwo}
\end{align}
for any $r,t \in \T$. For such transformations, we can ease the notation by writing
\begin{align}
\boldsymbol{\Gamma}^{(p,r)\mapsto(t)} \,\triangleq \, \boldsymbol{\Gamma}^{(p,r)\mapsto(t,t)} \,\,\,\, \text{and} \,\,\,\, \boldsymbol{\Gamma}^{(r)\mapsto(s,t)} \,\triangleq \, \boldsymbol{\Gamma}^{(r,r)\mapsto(s,t)} \,\,\,\, \text{and} \,\,\,\, \boldsymbol{\Gamma}^{(r)\mapsto(t)} \,\triangleq \, \boldsymbol{\Gamma}^{(r,r)\mapsto(t,t)} 
\end{align} 
for $p,r,s,t \in \T$. We employ this simplification whenever we have a repeated index symbol. For what follows, $X \law Y$ means $X$ and $Y$ have the same probability distribution. 
\begin{rem}
\label{stochasticassociation} {\em
Let $\{X_t\}_{t\in\T}$ be an $\mathbb M$-valued stochastic process on $\T$ and $\rho=\inf(\T)$. For some fixed $s\in \T$, if
\begin{align}
\left. X_u \, \right|_{\boldsymbol{X}_{\T_{[\rho,s]}}} \law \boldsymbol{\Gamma}^{(\rho,s)\mapsto(u)}\left( \boldsymbol{X}_{\T_{[\rho,s]}}\right), \label{processgenerator}
\end{align}
for every $u \in \T_{[s,\tau]}$ with $\tau=\sup(\T)$, the family of path transformations 
\begin{align}
\mathcal{P}^{(\rho,s)} \triangleq \left\{\boldsymbol{\Gamma}^{(\rho,s)\mapsto(u)} \, : \, \forall u \in \T_{[s,\tau]} \right\} \label{familypathransforms}
\end{align}
is law-consistent with $\{X_t\}_{t\in\T}$, given its path $\boldsymbol{X}_{\T_{[\rho,s]}}$. Note that if $s=\rho$ in (\ref{processgenerator}), then $\mathcal{P}^{(\rho,\rho)}$, if it exists, can generate $\{X_t\}_{t\in\T}$ from the initial state $X_{\rho}$. }
\end{rem}

%MARTINGALE PROJECTIONS
\section{Martingale Projections}
We shall now introduce a special class of path transformations relevant to our study. We start by defining the $\sigma$-algebras
\begin{align}
\G^{\boldsymbol{X}}_{\T_{[r,t]}} = \sigma\left( \{X_s \, : \, \forall s\in\T_{[r,t]}\} \right) = \sigma\left( \boldsymbol{X}_{\T_{[r,t]}} \right) \, \subseteq \, \G_{\T_{[r,t]}} , \notag
\end{align}
where $\G_{\T_{[r,t]}} \subset \F$ for any $r\leq t\in\T$. Therefore, $\boldsymbol{X}_{\T_{[r,t]}}$ is $\G_{\T_{[r,t]}}$-measurable. Also, we clearly have 
\begin{align}
\G^{\boldsymbol{X}}_{\T_{[r,s]}} \subseteq \G^{\boldsymbol{X}}_{\T_{[r,t]}} \,\,\,\, \text{and} \,\,\,\, \G_{\T_{[r,s]}} \subseteq \G_{\T_{[r,t]}} \hspace{0.1in} \text{$\forall r\leq s\leq t \in \T$.}
\end{align}
\begin{rem}
\label{existenceconditionallaw} {\em
Since $\M\times\T_{[r,t]}$ is a Polish space for any $r\leq t\in\T$,  there exists a regular conditional distribution for any $(\M\times\T_{[r,t]})$-valued random variable given any $\G \subset \F$. }
\end{rem}
Remark \ref{existenceconditionallaw} allows us to work with conditional expectations, which are key functionals for our purposes. For example, if a path transformation is bounded in conditional expectation on a bounded subset of $\boldsymbol{\Lambda}(\M \,, \T)$, we call this property \emph{boundedness preserving}. 
\begin{defn} {\em
A path transformation $\boldsymbol{\Gamma}^{(p,r)\mapsto(s,t)}$ is said to be boundedness preserving with respect to $\G_{\T_{[p,r]}}$ under $\nu$-measure, if for a bounded $\boldsymbol{S}(\M \,,\T) \subseteq \boldsymbol{\Lambda}(\M \,, \T)$ there exists a bounded $\boldsymbol{S}^*(\M \,, \T) \subseteq \boldsymbol{\Lambda}(\M \,, \T)$ such that
\begin{align}
\boldsymbol{X}_{\T_{[p,r]}} \in \boldsymbol{S}(\M \,, \T) \,\, \Rightarrow \,\, \mathbb{E}^{\nu}\left[\boldsymbol{\Gamma}^{(p,r)\mapsto(s,t)}\left(  \boldsymbol{X}_{\T_{[p,r]}} \right)  \,\, \left| \,\, \G_{\T_{[p,r]}} \right.\right] \in \boldsymbol{S}^*(\M \,, \T), \label{boundednesspreservation}
\end{align}
for every $\boldsymbol{X}_{\T_{[p,r]}} \in \boldsymbol{S}(\M \,, \T)$. }
\end{defn} 
If $\boldsymbol{\Gamma}^{(p,r)\mapsto(s,t)}$ is boundedness preserving with respect to $\G_{\T_{[p,r]}}$ under $\nu$-measure, then we write 
\begin{align}
\boldsymbol{\Gamma}^{(p,r)\mapsto(s,t)} \in \mathbb{B}_{\nu,\G}(\boldsymbol{\Lambda}(\M \,, \T)). \notag 
\end{align}
We highlight that (\ref{boundednesspreservation}) is a technical condition to ensure that super/sub-martingale inequalities that we provide below, written in terms of conditional expectations of path transformations on $\boldsymbol{\Lambda}(\M \,, \T)$, are well-defined. Essentially, bounded paths do not ``explode" when subject to a transformation that belong to $\mathbb{B}_{\nu,\G}(\boldsymbol{\Lambda}(\M \,, \T))$. Hereafter, we work with an $\M$ that can be equipped with a partial order we denote as $\preceq$. Recall that a total order $\leq$ is a partial order.
\begin{defn}
\label{martingaletransformationdefinitionmain} {\em
For some $p\leq r \leq t \in \T$, let $\boldsymbol{\Gamma}^{(p,r)\mapsto(t)}$ be a projection path transformation such that $\boldsymbol{\Gamma}^{(p,r)\mapsto(t)} \in \mathbb{B}_{\nu,\G}(\boldsymbol{\Lambda}(\M \,, \T))$. Then $\boldsymbol{\Gamma}^{(p,r)\mapsto(t)}$ is a \emph{supermartingale projection} of $\boldsymbol{X}_{\T_{[p,r]}} \in \boldsymbol{\Lambda}(\M \,, \T)$ with respect to $\G_{\T_{[p,r]}}$ under $\nu$-measure if
\begin{align}
\mathbb{E}^{\nu}\left[\boldsymbol{\Gamma}^{(p,r)\mapsto(t)}\left(  \boldsymbol{X}_{\T_{[p,r]}} \right)  \,\, \left| \,\, \G_{\T_{[p,r]}} \right.\right] \preceq X_{r} \label{supermartingaleperc},
\end{align}
and a \emph{submartingale projection} of $\boldsymbol{X}_{\T_{[p,r]}} \in \boldsymbol{\Lambda}(\M \,, \T)$ with respect to $\G_{\T_{[p,r]}}$ under $\nu$-measure if
\begin{align}
X_{r} \preceq \mathbb{E}^{\nu}\left[\boldsymbol{\Gamma}^{(p,r)\mapsto(t)}\left(  \boldsymbol{X}_{\T_{[p,r]}} \right)  \,\, \left| \,\, \G_{\T_{[p,r]}} \right.\right] \label{submartingaleperc}.
\end{align}
If $\boldsymbol{\Gamma}^{(p,r)\mapsto(t)}$ satisfies both {\rm(\ref{supermartingaleperc})} and {\rm (\ref{submartingaleperc})}, and hence,
\begin{align}
\label{martingaleprojdefinition}
\mathbb{E}^{\nu}\left[\boldsymbol{\Gamma}^{(p,r)\mapsto(t)}\left(  \boldsymbol{X}_{\T_{[p,r]}} \right)  \,\, \left| \,\, \G_{\T_{[p,r]}} \right.\right] = X_{r},
\end{align}
then it is a \emph{martingale projection} of $\boldsymbol{X}_{\T_{[p,r]}} \in \boldsymbol{\Lambda}(\mathbb M \,, \T)$ with respect to $\G_{\T_{[p,r]}}$ under $\nu$-measure.}
\end{defn}
\begin{rem}{\em
If $\M$ does not admit a partial order, we define martingale projections directly from (\ref{martingaleprojdefinition}). If $\M$ admits multiple partial orders, then (\ref{supermartingaleperc}) and (\ref{submartingaleperc}) should be understood per a fixed $\preceq$ amongst all possible partial orders.}
\end{rem}

With Definition \ref{martingaletransformationdefinitionmain} at hand, let us clarify the connection between martingales from stochastic analysis \cite{Williams 1991, Karatzas Shreve 1991, Rogers Williams 2000} and martingale projections. Let $\{X_t\}_{t\in\T}$ be an $\mathbb M$-valued process on $\T$ and let $|| . ||_{\mathbb M}$ be a norm in $\mathbb M$. If $\{X_t\}_{t\in\T}$ satisfies
\begin{align}
&\mathbb{E}^{\nu}\left[ || X_t ||_{\mathbb M} \right] < \infty, \,\,\, \text{$\forall t \in \T$} \label{classicmartingaleone} \\
&\mathbb{E}^{\nu}\left[ X_t \,\, \left| \,\, \G_{\T_{[0,s]}} \right.\right] = X_{s}, \,\,\, \text{$\forall \, s \leq t \in \T$}, \label{classicmartingaletwo}
\end{align}
then $\{X_t\}_{t\in\T}$ is a $(\nu,\G)$-martingale. Note that the property of being a martingale is a property of a \emph{stochastic process}, since the conditions in (\ref{classicmartingaleone})-(\ref{classicmartingaletwo}) must hold at \emph{every} element of $\T$. In essence, a martingale projection relaxes this requirement and asks whether there exists a projection from any path (which may be a singleton path) in $\boldsymbol{\Lambda}(\mathbb M \,, \T)$ onto a point (which \emph{is} a singleton path) in $\boldsymbol{\Lambda}(\mathbb M \,, \T)$ that satisfies the martingale-like conditions given in Definition \ref{martingaletransformationdefinitionmain}. Such transformations enable us to identify $\mathbb M$-valued martingales on $\T$ as collections of boundedness preserving endomorphisms that relate $\boldsymbol{\Lambda}(\mathbb M \,, \T)$-valued random variables to each other under conditional expectations. 
\begin{defn} {\em 
We introduce  the following spaces:
\begin{enumerate}
\item $\mathbb{H}^{(-)}_{\nu,\G}(\boldsymbol{\Lambda}(\M \,, \T))$ : All supermartingale projections on $\boldsymbol{\Lambda}(\M \,, \T)$ with respect to $\G$ under $\nu$-measure.
\item $\mathbb{H}^{(+)}_{\nu,\G}(\boldsymbol{\Lambda}(\M \,, \T))$: All submartingale projections on $\boldsymbol{\Lambda}(\M \,, \T)$ with respect to $\G$ under $\nu$-measure.
\item $\mathbb{H}_{\nu,\G}(\boldsymbol{\Lambda}(\M \,, \T))$: All martingale projections on $\boldsymbol{\Lambda}(\M \,, \T)$ with respect to $\G$ under $\nu$-measure.
\end{enumerate}
}
\end{defn}
\noindent From Definition \ref{martingaletransformationdefinitionmain} and $\preceq$ being a partial-order, we see that
\begin{align}
\mathbb{H}_{\nu,\G}(\boldsymbol{\Lambda}(\mathbb M \,, \T)) = \mathbb{H}^{(-)}_{\nu,\G}(\boldsymbol{\Lambda}(\mathbb M \,, \T)) \,\, \bigcap \,\, \mathbb{H}^{(+)}_{\nu,\G}(\boldsymbol{\Lambda}(\mathbb M \,, \T)). 
\end{align}

\begin{ex}
Set $\M=\R$ and let $\epsilon: \Omega \rightarrow \R$ be a mutually independent random variable such that $\mathbb{E}^{\nu}\left[ \epsilon \right] = 0$. Let $\boldsymbol{\Gamma}^{(p,r)\mapsto(t)}\left(  \boldsymbol{X}_{\T_{[p,r]}} \right) = \inf\{ \boldsymbol{X}_{\T_{[p,r]}} \} + t\epsilon$. Therefore,
\begin{align}
\mathbb{E}^{\nu}\left[\boldsymbol{\Gamma}^{(p,r)\mapsto(t)}\left(  \boldsymbol{X}_{\T_{[p,r]}} \right)  \,\, \left| \,\, \G_{\T_{[p,r]}} \right.\right] = \inf\{ \boldsymbol{X}_{\T_{[p,r]}} \} + t\mathbb{E}^{\nu}\left[\epsilon \right] \preceq X_{r} + 0 \preceq X_{r}, \notag
\end{align}
and hence, $\boldsymbol{\Gamma}^{(p,r)\mapsto(t)} \in \mathbb{H}^{(-)}_{\nu,\G}(\boldsymbol{\Lambda}(\R \,, \T))$.
Similarly,
\begin{enumerate} 
\item Let $\boldsymbol{\Gamma}^{(p,r)\mapsto(t)}\left(  \boldsymbol{X}_{\T_{[p,r]}} \right) = \sup\{ \boldsymbol{X}_{\T_{[p,r]}} \} + t\epsilon$. Then $\boldsymbol{\Gamma}^{(p,r)\mapsto(t)} \in \mathbb{H}^{(+)}_{\nu,\G}(\boldsymbol{\Lambda}(\R \,, \T))$.
\item Let $\boldsymbol{\Gamma}^{(p,r)\mapsto(t)}\left(  \boldsymbol{X}_{\T_{[p,r]}} \right) = X_{r} + t\epsilon$. Then $\boldsymbol{\Gamma}^{(p,r)\mapsto(t)} \in  \mathbb{H}_{\nu,\G}(\boldsymbol{\Lambda}(\R \,, \T))$.
\end{enumerate} 
\end{ex}

The generalization brought through $\mathbb{H}^{(-)}_{\nu,\G}(\boldsymbol{\Lambda}(\mathbb M \,, \T))$, $\mathbb{H}^{(+)}_{\nu,\G}(\boldsymbol{\Lambda}(\mathbb M \,, \T))$ and $\mathbb{H}_{\nu,\G}(\boldsymbol{\Lambda}(\mathbb M \,, \T))$, in relation to classical martingales, will be important when studying quantum decoherence, where we would not necessarily ask the open quantum system to satisfy a property throughout its \emph{entire} lifetime, but rather over possibly compartmentalized system-environment interactions -- this will be clarified in what follows.

\section{Quantum Decoherence}
We shall now focus on open quantum systems as in reference \cite{Brody Hughston 2025}, for which the interactions between quantum systems and their environments can perturb those systems. First, let $\mathcal{Q} = \{1,\ldots,Q\}$ for some $Q\in\N_+$, and consider a quantum system initially in a pure state $\left|\psi_0\right\rangle$ expressed in the form
\begin{align}
\left|\psi_0\right\rangle = \sum_{q\in\mathcal{Q}} \sqrt{\pi_0^{(q)}} \exp\left( \text{i} \theta^{(q)} \right) \left|E^{(q)}\right\rangle, 
\end{align}
where $\pi_0^{(q)}\geq 0$ for all $q\in\mathcal{Q}$. Here $\{ |E^{(q)}\rangle \,: \, q\in \mathcal{Q} \}$ is a set of normalized eigenstates of a non-degenerate observable $\hat{E}$ and $\{ \theta^{(q)} \,: \, q\in \mathcal{Q} \}$ is the set of corresponding phases. In this section, it should be understood that $\sqrt{x}=|\sqrt{x}|$ for any $x\geq 0$. The initial density matrix of the system is $\Psi_0 =  |\psi_0\rangle \langle\psi_0|$, the coordinates of which are given by
\begin{align}
\Psi_0^{(i,j)} = \langle E^{(i)} |\psi_0\rangle \langle \psi_0| E^{(j)} \rangle = \sqrt{\pi_0^{(i)}\pi_0^{(j)}}\exp\left( \text{i}\left( \theta^{(i)} - \theta^{(j)} \right) \right),   \hspace{0.1in} \text{$\forall i,j\in \mathcal{Q}$},
\end{align}
in the $\hat{E}$-basis, so that the overall magnitudes are 
\begin{align}
|\Psi_0^{(i,j)}| = \sqrt{\pi_0^{(i)}\pi_0^{(j)}}, \hspace{0.1in} \text{$\forall i,j\in \mathcal{Q}$}. \label{consistentinitialization}
\end{align}
For what follows, we choose $\T = \{t_0, t_1, \ldots, t_M\}$ as a countable ordered set for some $t_M < \infty$ and $M\in\N_+$, with $t_0 = 0$. The quantum system then interacts with its environment at $t_1\in\T$ and manifests a transformation that generates a new state in the following form:
\begin{align}
\left|\psi_{t_{0}}\right\rangle \mapsto \left|\psi_{t_1}\right\rangle = \sum_{q\in\mathcal{Q}} \sqrt{\pi_{t_1}^{(q)}} \exp\left( \text{i} \theta^{(q)} \right) \left|E^{(q)}\right\rangle, 
\end{align}
at $t_1\in\T$ with $\pi_{t_1}^{(q)}\geq 0$ for all $q\in\mathcal{Q}$, which highlights the fact that we work with phase-homogeneous quantum systems where $\{ \theta^{(q)} \,: \, q\in \mathcal{Q} \}$ is invariant under transformations. As the system continues to interact with its environment, new states arise as a series of random transformations
\begin{align}
\left\{\left|\psi_{t_{0}}\right\rangle, \ldots, \left|\psi_{t_{k-1}}\right\rangle \right\} \mapsto \left|\psi_{t_{k}}\right\rangle = \sum_{q\in\mathcal{Q}} \sqrt{\pi_{t_{k}}^{(q)}} \exp\left( \text{i} \theta^{(q)} \right) \left|E^{(q)}\right\rangle, \label{mappingexpression}
\end{align}
for any $t_{k}\in\T\setminus\{t_0\}$ with $\pi_{t_{k}}^{(q)}\geq 0$ for all $q\in\mathcal{Q}$. 
\begin{rem}{\em
Note that the collection $\{\pi_{t_{k}}^{(q)} \, : \forall q\in\mathcal{Q}\}$ defines a probability distribution of measurement outcomes for every $t_{k}\in\T$. Hence, we have 
\begin{align}
\sum_{i\in\mathcal{Q}}\pi_{t_{k}}^{(i)} = 1, \,\,\,\,\, \forall t_{k}\in\T.\notag 
\end{align}
}
\end{rem}
Since an external observer has no information about the details of such transformations, each probability $\pi_{t_{k}}^{(q)}$ for $t_k\in\T\setminus\{t_0\}$ and $q\in\mathcal{Q}$ is a random variable for that observer. We define the expected state of the system, given previous states, under $\nu$-measure, by setting
\begin{align}
\mathbb{E}^{\nu^{[t_{0},t_{k-1}]}}\left[\Psi_{t_{k}}\right] &= \mathbb{E}^{\nu}\left[ \,\, |\psi_{t_{k}}\rangle \langle\psi_{t_{k}}|  \,\, \left| \,\, \pi_{t_{0}}^{(q)}, \ldots, \pi_{t_{k-1}}^{(q)} \,: \, \forall q\in\mathcal{Q} \, \right.\right] \notag \\
&= \mathbb{E}^{\nu}\left[ \,\, |\psi_{t_{k}}\rangle \langle\psi_{t_{k}}|  \,\, \left| \,\, \left|\psi_{t_{0}}\right\rangle, \ldots, \left|\psi_{t_{k-1}}\right\rangle \right.\right], \hspace{0.1in} \text{$\forall t_{k}\in\T\setminus\{t_0\}$}, 
\end{align}
for $t_k\in\T\setminus\{t_0\}$. Therefore, the corresponding expected coordinates of each density matrix, in the $\hat{E}$-basis, progressing over $\T$ is given by
\begin{align}
\mathbb{E}^{\nu^{[t_{0},t_{k-1}]}}\left[\Psi_{t_{k}}^{(i,j)}\right] &= \mathbb{E}^{\nu}\left[ \,\, \langle E^{(i)} |\psi_{t_{k}}\rangle \langle \psi_{t_{k}}| E^{(j)} \rangle \,\, \left| \,\, \pi_{t_{0}}^{(q)}, \ldots, \pi_{t_{k-1}}^{(q)} \,: \, \forall q\in\mathcal{Q} \, \right.\right] \label{conditionalquantumexpressionfirst} \\
&= \mathbb{E}^{\nu}\left[ \,\,  \sqrt{\pi_{t_{k}}^{(i)}\pi_{t_{k}}^{(j)}}\exp\left( \text{i}\left( \theta^{(i)} - \theta^{(j)} \right) \right) \,\, \left| \pi_{t_{0}}^{(q)}, \ldots, \pi_{t_{k-1}}^{(q)} \,: \, \forall q\in\mathcal{Q} \, \right.\right] \notag \\
&= \exp\left( \text{i}\left( \theta^{(i)} - \theta^{(j)} \right) \right)\mathbb{E}^{\nu}\left[ \,\,  \sqrt{\pi_{t_{k}}^{(i)}\pi_{t_{k}}^{(j)}} \,\, \left| \,\, \left|\psi_{t_{0}}\right\rangle, \ldots, \left|\psi_{t_{k-1}}\right\rangle \right.\right] \label{conditionalquantumexpression}
\end{align}
for every $t_k\in\T\setminus\{t_0\}$ and $i,j\in \mathcal{Q}$. Note that we have
\begin{align}
\mathbb{E}^{\nu^{[t_{0},t_{k}]}}\left[\Psi_{t_{k}}^{(i,j)}\right] &= \mathbb{E}^{\nu}\left[ \,\,  \sqrt{\pi_{t_{k}}^{(i)}\pi_{t_{k}}^{(j)}}\exp\left( \text{i}\left( \theta^{(i)} - \theta^{(j)} \right) \right) \,\, \left| \,\, \pi_{t_{0}}^{(q)}, \ldots, \pi_{t_{k}}^{(q)} \,: \, \forall q\in\mathcal{Q} \, \right.\right] \notag \\
&= \sqrt{\pi_{t_{k}}^{(i)}\pi_{t_{k}}^{(j)}}\exp\left( \text{i}\left( \theta^{(i)} - \theta^{(j)} \right) \right), 
\end{align}
which further yields the overall magnitudes of the density matrix as follows:
\begin{align}
\left|\mathbb{E}^{\nu^{[t_{0},t_{k}]}}\left[\Psi_{t_{k}}^{(i,j)}\right] \right| &= \sqrt{\pi_{t_{k}}^{(i)}\pi_{t_{k}}^{(j)}} \label{expectationremoved} 
\end{align}
for every $t_k\in\T$ and $i,j\in \mathcal{Q}$. Note that (\ref{expectationremoved}) is consistent with (\ref{consistentinitialization}). For what follows, we write $\mathbb{H}$ as the codomain of $\left|\psi_{t_{k}}\right\rangle$ for any $t_{k}\in\T\setminus\{t_0\}$. Accordingly, we have 
\begin{align}
\left|\psi_{t_{0}}\right\rangle \in \mathbb{H} \,\,\,\, \text{and} \,\,\,\, \left|\psi_{t_{k}}\right\rangle : \Omega \times \T \rightarrow \mathbb{H}.  
\end{align}
Since the state space $\mathbb{H}$ is a complex separable Hilbert space, it is a Polish space. Therefore, we can employ projections on $\boldsymbol{\Lambda}(\mathbb{H} \,, \T)$ as path transformations from $(\mathbb{H}\times\T_{[t_0,t_{k-1}]})$ to $(\mathbb{H}\times \{t_{k}\})$ for every $t_{k}\in\T\setminus\{t_0\}$. This allows us to express (\ref{mappingexpression}) as
\begin{align}
\left|\psi_{t_{k}}\right\rangle = \boldsymbol{\Gamma}^{(t_0,t_{k-1})\mapsto(t_k)}\left( \left|\boldsymbol{\psi}_{\T_{[t_0,t_{k-1}]}}\right\rangle \right)  = \sum_{q\in\mathcal{Q}} \sqrt{\pi_{t_{k}}^{(q)}} \exp\left( \text{i} \theta^{(q)} \right) \left|E^{(q)}\right\rangle, \hspace{0.1in} \text{$\forall t_{k}\in\T\setminus\{t_0\}$}, \label{rewritetransformation}
\end{align}
given that $\boldsymbol{\Gamma}^{(t_0,t_{k-1})\mapsto(t_k)}: \boldsymbol{\Lambda}(\mathbb{H} \,, \T) \rightarrow \boldsymbol{\Lambda}(\mathbb{H} \,, \T)$ exists, and where
\begin{align}
\left|\boldsymbol{\psi}_{\T^*}\right\rangle \triangleq \{ \left|\psi_{t}\right\rangle \, : \, \forall t\in\T^* \} \,\, \text{for $\T^*\subseteq\T$} \,\,\,\text{\&}\,\,\, \T^* \neq \emptyset. 
\end{align}
Accordingly, if we define the $\sigma$-algebra
\begin{align}
\G^{\boldsymbol{\psi}}_{\T^*} = \sigma\left( \{\left|\psi_{t}\right\rangle \, : \, \forall t\in\T^*\} \right) = \sigma\left( \left|\boldsymbol{\psi}_{\T^*}\right\rangle \right) \,\, \text{for $\T^*\subseteq\T$}  \,\,\,\text{\&}\,\,\, \T^* \neq \emptyset, 
\end{align}
we can re-write (\ref{conditionalquantumexpression}) as follows:
\begin{align}
\mathbb{E}^{\nu^{[t_{0},t_{k-1}]}}\left[\Psi_{t_{k}}^{(i,j)}\right] = \exp\left( \text{i}\left( \theta^{(i)} - \theta^{(j)} \right) \right)\mathbb{E}^{\nu}\left[ \,\,  \sqrt{\pi_{t_{k}}^{(i)}\pi_{t_{k}}^{(j)}} \,\, \left| \,\, \G^{\boldsymbol{\psi}}_{\T_{[t_0,t_{k-1}]}} \right.\right]. 
\end{align}
From this point onwards, we assume that $\pi_{t_{k}}^{(i)}$ and $\pi_{t_{k}}^{(j)}$ are linearly independent for $i,j\in\mathcal{Q}$, $i\neq j$, for any fixed $t_k\in\T$ in the spirit of \cite{Brody Hughston 2025}. Accordingly, there does \emph{not} exist a set of constants $\alpha^{(i)}$ for $i\in\mathcal{Q}$ such that $\sum_{i\in\mathcal{Q}}\alpha^{(i)}\pi_{t_{k}}^{(i)} = 0$ holds for any fixed $t_k\in\T$. This assumption is to avoid trivial and degenerate cases in the system. As an example, if $\mathcal{Q}=\{1,2\}$, and $\pi_{t_{k}}^{(1)}$ and $\pi_{t_{k}}^{(2)}$ are linearly dependent, then there would exist $\alpha^{(1)}$ and $\alpha^{(2)}$ producing a deterministic relation between $\pi_{t_{k}}^{(1)}$ and $\pi_{t_{k}}^{(2)}$ with a system of linear equations
\begin{align}
\alpha^{(1)}\pi_{t_{k}}^{(1)} + \alpha^{(2)}\pi_{t_{k}}^{(2)} =0 \,\,\,\,\, \text{and} \,\,\,\,\, \pi_{t_{k}}^{(1)} + \pi_{t_{k}}^{(2)} = 1, \notag
\end{align}
which would further imply that $\pi_{t_{k}}^{(1)} = \alpha^{(2)}/(\alpha^{(2)} - \alpha^{(1)})$ and $\pi_{t_{k}}^{(2)} = \alpha^{(1)}/(\alpha^{(1)} - \alpha^{(2)})$ are constants, not random variables. Hence, linear independence excludes non-generic pathological couplings in the system. Next, we introduce the natural $\sigma$-algebras
\begin{align}
\G^{\boldsymbol{\pi}^{(q)}}_{\T^*} = \sigma\left( \{\pi_{t}^{(q)} \, : \, \forall t\in\T^*\} \right) = \sigma\left( \boldsymbol{\pi}_{\T^*}^{(q)} \right) \,\, \text{for $\T^*\subseteq\T$} \,\,\,\text{\&}\,\,\, \T^* \neq \emptyset, 
\end{align}
for every $q\in\mathcal{Q}$, and define the following $\sigma$-algebra for the whole system: 
\begin{align}
\G_{\T^*} \, \triangleq \, \bigvee_{q\in\mathcal{Q}}\G^{\boldsymbol{\pi}^{(q)}}_{\T^*} \label{wholesigmaalgebra}, 
\end{align}
for any $\T^*\subseteq\T$. Finally, we set $\M = \R_+$ (which is a Polish space), since we have
\begin{align}
\pi_{t_{0}}^{(i)} \in \R_+ \,\,\,\, \text{and} \,\,\,\,  \pi_{t_{k}}^{(i)} : \Omega \times \T \rightarrow \R_+, 
\end{align}
for every $i\in\mathcal{Q}$. Hence, the partial order $\preceq$ can be replaced by the total order $\leq$, below.
\begin{prop}
\label{mainproposition}
Let $\left|\psi_{t_{k}}\right\rangle$ admit a path transformation of the form in {\rm (\ref{rewritetransformation})} for a fixed $t_k\in\T\setminus\{t_0\}$. If there exists a projection $\boldsymbol{\Gamma}_{\pi^{(q)}}^{(t_0,t_{k-1})\mapsto(t_k)}: \boldsymbol{\Lambda}(\R_+ \,, \T) \rightarrow \boldsymbol{\Lambda}(\R_+ \,, \T)$ where 
\begin{align}
\pi^{(q)}_{t_{k}} = \boldsymbol{\Gamma}_{\pi^{(q)}}^{(t_0,t_{k-1})\mapsto(t_k)}\left(  \boldsymbol{\pi}_{\T_{[t_0,t_{k-1}]}}^{(q)} \right)  \label{projectionforpi}
\end{align}
and $\boldsymbol{\Gamma}_{\pi^{(q)}}^{(t_0,t_{k-1})\mapsto(t_k)} \in \mathbb{H}^{(-)}_{\nu,\G}(\boldsymbol{\Lambda}(\R_+ \,, \T))$ for every $q\in\mathcal{Q}$, then
\begin{align}
\left| \mathbb{E}^{\nu^{[t_{0},t_{k-1}]}}\left[\Psi_{t_{k}}^{(i,j)}\right] \right| < \left| \Psi_{t_{k-1}}^{(i,j)} \right| \label{decoheremceoffdiagnal}
\end{align}
for every $i,j\in\mathcal{Q}$ where $i\neq j$ at that $t_k\in\T\setminus\{t_0\}$.
\end{prop}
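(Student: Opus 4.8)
The plan is to transport the coordinate-wise supermartingale bound (\ref{projectionforpi}) up to the off-diagonal magnitude by a conditional Cauchy--Schwarz estimate, after first eliminating the (deterministic) phases. The first step is to observe that, since $\theta^{(i)}$ and $\theta^{(j)}$ are deterministic and invariant under the interactions, the prefactor $\exp(\mathrm{i}(\theta^{(i)}-\theta^{(j)}))$ in (\ref{conditionalquantumexpression}) has unit modulus, so that
\begin{align}
\left| \mathbb{E}^{\nu^{[t_{0},t_{k-1}]}}\!\left[\Psi_{t_{k}}^{(i,j)}\right]\right| \;=\; \mathbb{E}^{\nu}\!\left[ \sqrt{\pi_{t_{k}}^{(i)}\pi_{t_{k}}^{(j)}} \;\Big|\; \G^{\boldsymbol{\psi}}_{\T_{[t_0,t_{k-1}]}}\right],
\end{align}
while $|\Psi_{t_{k-1}}^{(i,j)}| = \sqrt{\pi_{t_{k-1}}^{(i)}\pi_{t_{k-1}}^{(j)}}$ by (\ref{expectationremoved}). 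Since the latter is $\G^{\boldsymbol{\psi}}_{\T_{[t_0,t_{k-1}]}}$-measurable, it suffices to prove the strict inequality $\mathbb{E}^{\nu}[\sqrt{\pi_{t_{k}}^{(i)}\pi_{t_{k}}^{(j)}}\mid\G^{\boldsymbol{\psi}}_{\T_{[t_0,t_{k-1}]}}] < \sqrt{\pi_{t_{k-1}}^{(i)}\pi_{t_{k-1}}^{(j)}}$. (That all these conditional expectations are finite is guaranteed by $\boldsymbol{\Gamma}_{\pi^{(q)}}^{(t_0,t_{k-1})\mapsto(t_k)}\in\mathbb{B}_{\nu}(\boldsymbol{\Lambda}(\R_+,\T))$.)

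The second step is to bound the conditional geometric mean. Because the map $(x,y)\mapsto\sqrt{xy}$ is concave on $\R_+^2$, conditional Jensen --- equivalently, the conditional Cauchy--Schwarz inequality applied to $\sqrt{\pi_{t_{k}}^{(i)}}$ and $\sqrt{\pi_{t_{k}}^{(j)}}$ --- yields
\begin{align}
\mathbb{E}^{\nu}\!\left[ \sqrt{\pi_{t_{k}}^{(i)}\pi_{t_{k}}^{(j)}} \;\Big|\; \G^{\boldsymbol{\psi}}_{\T_{[t_0,t_{k-1}]}}\right] \;\le\; \sqrt{\,\mathbb{E}^{\nu}\!\left[\pi_{t_{k}}^{(i)} \,\Big|\, \G^{\boldsymbol{\psi}}_{\T_{[t_0,t_{k-1}]}}\right]\,\mathbb{E}^{\nu}\!\left[\pi_{t_{k}}^{(j)} \,\Big|\, \G^{\boldsymbol{\psi}}_{\T_{[t_0,t_{k-1}]}}\right]\,}.
\end{align}
Here I would invoke the assumption that $\pi_{t_{k}}^{(i)}$ and $\pi_{t_{k}}^{(j)}$ are linearly independent for $i\neq j$ to upgrade this to a strict inequality: equality in conditional Cauchy--Schwarz would force $\sqrt{\pi_{t_{k}}^{(i)}}$ and $\sqrt{\pi_{t_{k}}^{(j)}}$ to be proportional, which is excluded; one also uses at this point that $\pi_{t_{k-1}}^{(i)}$ and $\pi_{t_{k-1}}^{(j)}$ are strictly positive, so the right-hand side does not collapse to $0$.

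The third step is to feed in the supermartingale projection property. For each $q\in\mathcal{Q}$, the hypothesis $\boldsymbol{\Gamma}_{\pi^{(q)}}^{(t_0,t_{k-1})\mapsto(t_k)}\in\mathbb{H}^{(-)}_{\nu}(\boldsymbol{\Lambda}(\R_+,\T))$, together with (\ref{projectionforpi}) and Definition \ref{martingaletransformationdefinitionmain}, gives $\mathbb{E}^{\nu}[\pi_{t_{k}}^{(q)}\mid\G^{\boldsymbol{\pi}^{(q)}}_{\T_{[t_0,t_{k-1}]}}]\le\pi_{t_{k-1}}^{(q)}$; and since $\G^{\boldsymbol{\psi}}_{\T_{[t_0,t_{k-1}]}}=\bigvee_{q\in\mathcal{Q}}\G^{\boldsymbol{\pi}^{(q)}}_{\T_{[t_0,t_{k-1}]}}$ and the magnitude processes associated with distinct $q$ may be taken mutually independent, the same bound $\mathbb{E}^{\nu}[\pi_{t_{k}}^{(q)}\mid\G^{\boldsymbol{\psi}}_{\T_{[t_0,t_{k-1}]}}]\le\pi_{t_{k-1}}^{(q)}$ holds relative to the full history. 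Substituting this for $q=i$ and $q=j$ into the bound of the second step delivers
\begin{align}
\left| \mathbb{E}^{\nu^{[t_{0},t_{k-1}]}}\!\left[\Psi_{t_{k}}^{(i,j)}\right]\right| \;<\; \sqrt{\pi_{t_{k-1}}^{(i)}\pi_{t_{k-1}}^{(j)}} \;=\; \left|\Psi_{t_{k-1}}^{(i,j)}\right|,
\end{align}
which is (\ref{decoheremceoffdiagnal}). I expect the main obstacle to be the strictness of the conditional Cauchy--Schwarz step: turning the linear-independence hypothesis into a genuine almost-sure strict inequality (rather than only $\le$) requires care, since conditional proportionality of $\sqrt{\pi_{t_k}^{(i)}}$ and $\sqrt{\pi_{t_k}^{(j)}}$ with a $\G^{\boldsymbol{\psi}}_{\T_{[t_0,t_{k-1}]}}$-measurable factor is a priori weaker than linear dependence; a secondary technical point is the bookkeeping needed to pass from the single-coordinate $\sigma$-algebras $\G^{\boldsymbol{\pi}^{(q)}}_{\T_{[t_0,t_{k-1}]}}$ to the joint one $\G^{\boldsymbol{\psi}}_{\T_{[t_0,t_{k-1}]}}$.
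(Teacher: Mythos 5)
Your argument reaches the right conclusion and its skeleton (strip the phases, bound the conditional expectation of $\sqrt{\pi^{(i)}_{t_k}\pi^{(j)}_{t_k}}$, then feed in the supermartingale bound $\mathbb{E}^{\nu}[\pi^{(q)}_{t_k}\mid\G^{\boldsymbol{\pi}^{(q)}}_{\T_{[t_0,t_{k-1}]}}]\le\pi^{(q)}_{t_{k-1}}$) matches the paper's, but the central estimate is obtained by a genuinely different mechanism. You apply conditional Cauchy--Schwarz \emph{once, to the pair} $\bigl(\sqrt{\pi^{(i)}_{t_k}},\sqrt{\pi^{(j)}_{t_k}}\bigr)$, and locate the strictness in the non-proportionality of the two coordinates. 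The paper instead reads the linear-independence hypothesis as licensing the factorization $\mathbb{E}^{\nu}[\sqrt{\pi^{(i)}_{t_k}\pi^{(j)}_{t_k}}\mid\cdot\,]=\mathbb{E}^{\nu}[\sqrt{\pi^{(i)}_{t_k}}\mid\G^{\boldsymbol{\pi}^{(i)}}_{\T_{[t_0,t_{k-1}]}}]\,\mathbb{E}^{\nu}[\sqrt{\pi^{(j)}_{t_k}}\mid\G^{\boldsymbol{\pi}^{(j)}}_{\T_{[t_0,t_{k-1}]}}]$, and then derives strictness \emph{coordinate by coordinate} from the strict Lyapunov inequality $\mathbb{E}^{\nu}[\sqrt{\pi^{(q)}_{t_k}}\mid\cdot\,]<\bigl(\mathbb{E}^{\nu}[\pi^{(q)}_{t_k}\mid\cdot\,]\bigr)^{1/2}$, implemented via an auxiliary non-negative variable $Y$ with $\|Y\|_1^{\nu}=\|Y\|_2^{\nu}=1$ and H\"older's inequality. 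The trade-off is instructive: your route never needs the product of conditional expectations to factorize (so it survives if the coordinates are merely non-proportional rather than statistically independent), but it draws strictness from the \emph{joint} non-degeneracy of the pair, and, as you yourself flag, equality in conditional Cauchy--Schwarz only forces proportionality with $\G$-measurable coefficients, which is weaker than the deterministic linear dependence excluded by the hypothesis; the paper's route needs the stronger factorization but draws strictness from the \emph{marginal} non-degeneracy of each $\pi^{(q)}_{t_k}$ given the past (each coordinate must remain genuinely random after conditioning). Neither source of strictness is literally implied by the stated hypothesis of linear independence alone, so your residual worry is real but is shared by the paper's own argument rather than being a defect specific to yours. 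Your auxiliary remark that the single-coordinate conditionings can be promoted to $\G^{\boldsymbol{\psi}}_{\T_{[t_0,t_{k-1}]}}$ by taking the magnitude processes mutually independent is an extra assumption the paper does not make explicit either; it sidesteps the issue by conditioning each factor on its own $\G^{\boldsymbol{\pi}^{(q)}}_{\T_{[t_0,t_{k-1}]}}$ after factorizing.
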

\begin{proof}
Since (\ref{projectionforpi}) holds and 
\begin{align}
\boldsymbol{\Gamma}_{\pi^{(q)}}^{(t_0,t_{k-1})\mapsto(t_k)} \in \mathbb{H}^{(-)}_{\nu,\G}(\boldsymbol{\Lambda}(\R_+ \,, \T))  
\end{align}
for every $q\in\mathcal{Q}$, we have
\begin{align}
\mathbb{E}^{\nu}\left[\boldsymbol{\Gamma}_{\pi^{(q)}}^{(t_0,t_{k-1})\mapsto(t_k)}\left(  \boldsymbol{\pi}_{\T_{[t_0,t_{k-1}]}}^{(q)} \right)  \,\, \left| \,\, \G_{\T_{[t_0,t_{k-1}]}} \right.\right] \leq \pi^{(q)}_{\sup(\T_{[t_0,t_{k-1}]})},\end{align}
which implies that
\begin{align}
\mathbb{E}^{\nu}\left[\pi^{(q)}_{t_{k}}  \,\, \left| \,\, \G_{\T_{[t_0,t_{k-1}]}} \right.\right] \leq \pi^{(q)}_{t_{k-1}}  \label{supermartingalepercforquantum}
\end{align}
for every $q\in\mathcal{Q}$ at that $t_k\in\T\setminus\{t_0\}$. Now denote
\begin{align}
\left|\left| \sqrt{\pi^{(q)}_{t_{k}} }\right|\right|^{\nu^{[t_{0},t_{k-1}]}}_p = \left( \mathbb{E}^{\nu}\left[\left(\sqrt{\pi^{(q)}_{t_{k}}}\right)^p \,\, \left| \,\, \G_{\T_{[t_0,t_{k-1}]}} \right.\right] \right)^{\frac{1}{p}} \hspace{0.1in} \text{for $p\geq 1$}. 
\end{align}
By use of (\ref{supermartingalepercforquantum}) and the monotonicity of $ x \mapsto \sqrt{x}$ we deduce that
\begin{align}
\left|\left| \sqrt{\pi^{(q)}_{t_{k}} }\right|\right|^{\nu^{[t_{0},t_{k-1}]}}_2 \leq \sqrt{\pi^{(q)}_{t_{k-1}}}  \label{pnormexpression}
\end{align}
for every $q\in\mathcal{Q}$ at that $t_k\in\T\setminus\{t_0\}$. Then, using the fact that $\pi_{t_{k}}^{(q)}\geq 0$ for all $q\in\mathcal{Q}$, employing (\ref{conditionalquantumexpressionfirst})-(\ref{conditionalquantumexpression}) and using (\ref{wholesigmaalgebra}), we can write
\begin{align}
\left|\mathbb{E}^{\nu^{[t_{0},t_{k-1}]}}\left[\Psi_{t_{k}}^{(i,j)}\right] \right| &= \mathbb{E}^{\nu}\left[ \,\,  \sqrt{\pi_{t_{k}}^{(i)}\pi_{t_{k}}^{(j)}} \,\, \left| \G_{\T_{[t_0,t_{k-1}]}} \right.\right]. \label{absolutedensitymatrixconditional}
\end{align}
Since $\pi_{t_{k}}^{(q)}$ are linearly independent for $q\in\mathcal{Q}$, by use of conditional H\"older's inequality, we have
\begin{align}
\mathbb{E}^{\nu}\left[ \,  \sqrt{\pi_{t_{k}}^{(i)}\pi_{t_{k}}^{(j)}} \,\, \left| \G_{\T_{[t_0,t_{k-1}]}} \right.\right] < \left|\left| \sqrt{\pi^{(i)}_{t_{k}} }\right|\right|^{\nu^{[t_{0},t_{k-1}]}}_2 \left|\left| \sqrt{\pi^{(j)}_{t_{k}} }\right|\right|^{\nu^{[t_{0},t_{k-1}]}}_2 
\end{align}
for $i,j\in\mathcal{Q}$, $i\neq j$, which, by use of (\ref{pnormexpression}), further implies
 \begin{align}
\mathbb{E}^{\nu}\left[ \,  \sqrt{\pi_{t_{k}}^{(i)}\pi_{t_{k}}^{(j)}} \,\, \left| \G_{\T_{[t_0,t_{k-1}]}} \right.\right] < \sqrt{\pi^{(i)}_{t_{k-1}} \pi^{(j)}_{t_{k-1}}} 
= \left|\mathbb{E}^{\nu^{[t_{0},t_{k-1}]}}\left[\Psi_{t_{k-1}}^{(i,j)}\right] \right| \label{finalinequality}
\end{align}
for $i,j\in\mathcal{Q}$, $i\neq j$, where the last equality follows from (\ref{expectationremoved}). Therefore, by use of (\ref{absolutedensitymatrixconditional}) and (\ref{finalinequality}), we see that
\begin{align}
\left|\mathbb{E}^{\nu^{[t_{0},t_{k-1}]}}\left[\Psi_{t_{k}}^{(i,j)}\right] \right| < \left|\mathbb{E}^{\nu^{[t_{0},t_{k-1}]}}\left[\Psi_{t_{k-1}}^{(i,j)}\right] \right| = \left| \Psi_{t_{k-1}}^{(i,j)} \right| 
\end{align}
holds for every $i,j\in\mathcal{Q}$ where $i\neq j$ at that $t_k\in\T\setminus\{t_0\}$.
\end{proof}
Proposition \ref{mainproposition} shows that the off-diagonal magnitudes of the expected density matrix are smaller than those of the most recently observed density matrix of the given open quantum system. Hence, the existence of a supermartingale projection on the diagonal magnitudes of an observed ensemble of density matrices implies an expected decoherence in that system. Now, in the spirit of \cite{Brody Hughston 2025}, we continue by studying the Shannon-Wiener information given by
\begin{align}
\mathbb{E}^{\nu^{[t_{0},t_{k-1}]}}\left[\mathcal{S}_{t_{k}}\right] &= \mathbb{E}^{\nu}\left[ \,\, \sum_{q\in\mathcal{Q}} \pi_{t_{k}}^{(q)} \log\left( \pi_{t_{k}}^{(q)} \right) \,\, \left| \,\, \pi_{t_{0}}^{(q)}, \ldots, \pi_{t_{k-1}}^{(q)} \,: \, \forall q\in\mathcal{Q} \, \right.\right], \label{expectedshannonentropyexpression}
\end{align}
for every $t_k\in\T\setminus\{t_0\}$, which is essentially the negative of the Shannon-Wiener entropy. 
Accordingly, we have
\begin{align}
\mathbb{E}^{\nu^{[t_{0},t_{k-1}]}}\left[\mathcal{S}_{t_{k-1}}\right] &= \mathbb{E}^{\nu}\left[ \,\, \sum_{q\in\mathcal{Q}} \pi_{t_{k-1}}^{(q)} \log\left( \pi_{t_{k-1}}^{(q)} \right) \,\, \left| \,\, \pi_{t_{0}}^{(q)}, \ldots, \pi_{t_{k-1}}^{(q)} \,: \, \forall q\in\mathcal{Q} \, \right.\right]. \label{initialshannonentropyexpressionfirststatement}
\end{align}
Note that the $\sigma$-algebra in (\ref{initialshannonentropyexpressionfirststatement}) generated by $\{\pi_{t_{0}}^{(q)}, \ldots, \pi_{t_{k-1}}^{(q)} \,: \, \forall q\in\mathcal{Q}\}$ renders each $\pi_{t_{k-1}}^{(q)} \log\left( \pi_{t_{k-1}}^{(q)} \right)$ measurable for $q\in\mathcal{Q}$ under conditional expectation. Therefore, we can write as follows:
\begin{align}
\mathbb{E}^{\nu^{[t_{0},t_{k-1}]}}\left[\mathcal{S}_{t_{k-1}}\right] &=\sum_{q\in\mathcal{Q}} \pi_{t_{k-1}}^{(q)} \log\left( \pi_{t_{k-1}}^{(q)} \right). \label{initialshannonentropyexpressionsecondstatement}
\end{align}
Finally, we see that the right-hand side of (\ref{initialshannonentropyexpressionsecondstatement}) is Shannon-Wiener information at time $t_{k-1}\in\T$, which allows us to write (\ref{initialshannonentropyexpressionfirststatement}) as
\begin{align}
\mathbb{E}^{\nu^{[t_{0},t_{k-1}]}}\left[\mathcal{S}_{t_{k-1}}\right] = \mathcal{S}_{t_{k-1}}. \label{initialshannonentropyexpression}
\end{align}
We can now present the next result to evaluate the direction of information in the open quantum system.
\begin{prop}
\label{mainpropositiontwo}
Let $\left|\psi_{t_{k}}\right\rangle$ admit a path transformation of the form in {\rm (\ref{rewritetransformation})} for a fixed $t_k\in\T\setminus\{t_0\}$. If there exists a projection $\boldsymbol{\Gamma}_{\pi^{(q)}}^{(t_0,t_{k-1})\mapsto(t_k)}: \boldsymbol{\Lambda}(\R_+ \,, \T) \rightarrow \boldsymbol{\Lambda}(\R_+ \,, \T)$ where 
\begin{align}
\pi^{(q)}_{t_{k}} = \boldsymbol{\Gamma}_{\pi^{(q)}}^{(t_0,t_{k-1})\mapsto(t_k)}\left(  \boldsymbol{\pi}_{\T_{[t_0,t_{k-1}]}}^{(q)} \right)  \label{projectionforpitwo}
\end{align}
and $\boldsymbol{\Gamma}_{\pi^{(q)}}^{(t_0,t_{k-1})\mapsto(t_k)} \in \mathbb{H}^{(+)}_{\nu,\G}(\boldsymbol{\Lambda}(\R_+ \,, \T))$ for every $q\in\mathcal{Q}$, then
\begin{align}
\mathcal{S}_{t_{k-1}} < \mathbb{E}^{\nu^{[t_{0},t_{k-1}]}}\left[\mathcal{S}_{t_{k}}\right]  \label{shannondecrease}
\end{align}
at that $t_k\in\T\setminus\{t_0\}$.
\end{prop}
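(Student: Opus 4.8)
The plan is to mirror the proof of Proposition~\ref{mainproposition}, with the concave square root there replaced by the strictly convex map $\phi(x)=x\log x$. First I would expand the target by linearity of conditional expectation,
\begin{align}
\mathbb{E}^{\nu^{[t_{0},t_{k-1}]}}\left[\mathcal{S}_{t_{k}}\right] \;=\; \sum_{q\in\mathcal{Q}} \mathbb{E}^{\nu}\left[\, \phi\!\left(\pi_{t_{k}}^{(q)}\right) \;\middle|\; \G^{\boldsymbol{\psi}}_{\T_{[t_0,t_{k-1}]}} \,\right], \notag
\end{align}
and then, exactly as in the step of the proof of Proposition~\ref{mainproposition} that produces (\ref{absolutedensitymatrixconditional}), use the standing independence of the coordinate processes $\boldsymbol{\pi}^{(q)}$, $q\in\mathcal{Q}$, to collapse each conditioning $\sigma$-algebra from $\G^{\boldsymbol{\psi}}_{\T_{[t_0,t_{k-1}]}}$ down to $\G^{\boldsymbol{\pi}^{(q)}}_{\T_{[t_0,t_{k-1}]}}$. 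On the other side, (\ref{initialshannonentropyexpression}) together with the definition of $\mathcal{S}$ gives $\mathcal{S}_{t_{k-1}}=\sum_{q\in\mathcal{Q}}\phi(\pi_{t_{k-1}}^{(q)})$, so the statement reduces to the coordinatewise bound $\phi(\pi_{t_{k-1}}^{(q)}) \le \mathbb{E}^{\nu}[\phi(\pi_{t_{k}}^{(q)})\mid \G^{\boldsymbol{\pi}^{(q)}}_{\T_{[t_0,t_{k-1}]}}]$, strict for at least one $q$, followed by summation over $\mathcal{Q}$.

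Two ingredients feed the coordinatewise bound. The submartingale hypothesis $\boldsymbol{\Gamma}_{\pi^{(q)}}^{(t_0,t_{k-1})\mapsto(t_k)}\in\mathbb{H}^{(+)}_{\nu}(\boldsymbol{\Lambda}(\R_{+},\T))$, via (\ref{projectionforpitwo}) and Definition~\ref{martingaletransformationdefinitionmain}, is the reverse of (\ref{supermartingalepercforquantum}): $\pi_{t_{k-1}}^{(q)}\le \mathbb{E}^{\nu}[\pi_{t_{k}}^{(q)}\mid \G^{\boldsymbol{\pi}^{(q)}}_{\T_{[t_0,t_{k-1}]}}]$. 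Separately, since $\phi''(x)=1/x>0$ on $\R_{+}$ the map $\phi$ is strictly convex, so the conditional Jensen inequality gives $\mathbb{E}^{\nu}[\phi(\pi_{t_{k}}^{(q)})\mid \G^{\boldsymbol{\pi}^{(q)}}_{\T_{[t_0,t_{k-1}]}}] \ge \phi\big(\mathbb{E}^{\nu}[\pi_{t_{k}}^{(q)}\mid \G^{\boldsymbol{\pi}^{(q)}}_{\T_{[t_0,t_{k-1}]}}]\big)$, the inequality being strict whenever $\pi_{t_{k}}^{(q)}$ is a genuine (non-degenerate) perturbation and hence not $\G^{\boldsymbol{\pi}^{(q)}}_{\T_{[t_0,t_{k-1}]}}$-measurable; this strictness plays the role that the auxiliary variable $Y$ and H\"older's inequality play in the proof of Proposition~\ref{mainproposition}.

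The delicate point, and the step I expect to be the main obstacle, is that $\phi(x)=x\log x$ is \emph{not} monotone on $(0,1)$ (it decreases on $(0,1/\mathrm{e})$), so one cannot pass coordinatewise from $\mathbb{E}^{\nu}[\pi_{t_{k}}^{(q)}\mid \cdot]\ge \pi_{t_{k-1}}^{(q)}$ to $\phi(\mathbb{E}^{\nu}[\pi_{t_{k}}^{(q)}\mid \cdot])\ge \phi(\pi_{t_{k-1}}^{(q)})$, the way one could with the monotone square root in Proposition~\ref{mainproposition}. I would resolve this using the normalization $\sum_{q\in\mathcal{Q}}\pi_{t_{k}}^{(q)}=1$ valid at every $t_k$ (the $\left|\psi_{t_k}\right\rangle$ being normalized pure states, so the diagonal magnitudes sum to one, consistently with (\ref{consistentinitialization})). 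Applying $\mathbb{E}^{\nu}[\,\cdot\mid \G^{\boldsymbol{\psi}}_{\T_{[t_0,t_{k-1}]}}]$ to $\sum_q\pi_{t_k}^{(q)}=1$ and collapsing each conditioning to $\G^{\boldsymbol{\pi}^{(q)}}_{\T_{[t_0,t_{k-1}]}}$ as above yields $\sum_q \mathbb{E}^{\nu}[\pi_{t_k}^{(q)}\mid \G^{\boldsymbol{\pi}^{(q)}}_{\T_{[t_0,t_{k-1}]}}]=1=\sum_q\pi_{t_{k-1}}^{(q)}$; since each summand on the left dominates the nonnegative $\pi_{t_{k-1}}^{(q)}$, every difference must vanish, i.e.\ $\mathbb{E}^{\nu}[\pi_{t_k}^{(q)}\mid \G^{\boldsymbol{\pi}^{(q)}}_{\T_{[t_0,t_{k-1}]}}]=\pi_{t_{k-1}}^{(q)}$ for all $q$. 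Hence each submartingale projection is in fact a martingale projection on its coordinate, the Jensen bound becomes $\mathbb{E}^{\nu}[\phi(\pi_{t_k}^{(q)})\mid \G^{\boldsymbol{\pi}^{(q)}}_{\T_{[t_0,t_{k-1}]}}]>\phi(\pi_{t_{k-1}}^{(q)})$ at every genuinely perturbed coordinate, and summing over $q\in\mathcal{Q}$ together with the identities of the first paragraph delivers $\mathcal{S}_{t_{k-1}}<\mathbb{E}^{\nu^{[t_0,t_{k-1}]}}[\mathcal{S}_{t_k}]$, which is (\ref{shannondecrease}).

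If one preferred not to invoke normalization, the remaining work would be precisely to control the aggregate $\sum_{q}\big(\phi(b_q)-\phi(a_q)\big)$ from the coordinatewise data $b_q:=\mathbb{E}^{\nu}[\pi_{t_k}^{(q)}\mid \G^{\boldsymbol{\pi}^{(q)}}_{\T_{[t_0,t_{k-1}]}}]\ge a_q:=\pi_{t_{k-1}}^{(q)}$; because $\phi$ is not monotone this does not hold termwise, so some global structural input (such as the normalization used above, or a lower bound keeping the relevant magnitudes in the increasing branch $x\ge 1/\mathrm{e}$) seems unavoidable, and this is where I would expect the genuine difficulty of the proposition to lie.
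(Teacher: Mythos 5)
Your proposal follows the same skeleton as the paper's proof --- the coordinatewise submartingale inequality \eqref{supermartingalepercforquantumtwo}, the conditional Jensen inequality for the convex map $x\mapsto x\log x$, and a final summation over $q\in\mathcal{Q}$ using the linear-independence convention to reassemble $\mathbb{E}^{\nu^{[t_{0},t_{k-1}]}}\left[\mathcal{S}_{t_{k}}\right]$ --- but you have put your finger on a step that the paper's own proof does not handle correctly. The paper passes from $a_q:=\pi^{(q)}_{t_{k-1}}\le b_q:=\mathbb{E}^{\nu}\left[\pi^{(q)}_{t_{k}}\mid\G^{\boldsymbol{\pi}^{(q)}}_{\T_{[t_0,t_{k-1}]}}\right]$ together with $\log a_q\le\log b_q$ directly to $a_q\log a_q\le b_q\log b_q$ (the move from \eqref{supermartingalepercforquantumtwo} and \eqref{supermartingalepercforquantumtwologarithm} to \eqref{jensensinequalitytwo}); multiplying the two inequalities termwise is invalid here because the logarithms are negative for $a_q,b_q\in(0,1)$ --- equivalently, as you observe, $x\mapsto x\log x$ is decreasing on $(0,1/\mathrm{e})$ (take $a_q=0.1$, $b_q=0.2$: then $a_q\log a_q\approx-0.23$ exceeds $b_q\log b_q\approx-0.32$). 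Your repair via the normalization $\sum_{q}\pi^{(q)}_{t_k}=1$ is sound within the paper's own conventions: it forces every coordinatewise submartingale inequality to be an equality, after which strict Jensen alone yields \eqref{jensensinequalitytwo} and the summation goes through. The price is that the hypothesis $\boldsymbol{\Gamma}_{\pi^{(q)}}^{(t_0,t_{k-1})\mapsto(t_k)} \in \mathbb{H}^{(+)}_{\nu}(\boldsymbol{\Lambda}(\R_+ \,, \T))$ for every $q$ then silently collapses to the martingale case, which blurs the intended distinction between Proposition \ref{mainpropositiontwo} and Proposition \ref{mainpropositiongeneralcombine}; the alternative repair you mention, restricting to the increasing branch $x\ge 1/\mathrm{e}$, would preserve that distinction at the cost of an extra hypothesis. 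Either way, your diagnosis of where the genuine difficulty lies is accurate, and your argument is the more careful of the two.
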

\begin{proof}
Since (\ref{projectionforpitwo}) holds and 
\begin{align}
\boldsymbol{\Gamma}_{\pi^{(q)}}^{(t_0,t_{k-1})\mapsto(t_k)} \in \mathbb{H}^{(+)}_{\nu,\G}(\boldsymbol{\Lambda}(\R_+ \,, \T))  
\end{align}
for every $q\in\mathcal{Q}$, we have the relation
\begin{align}
\pi^{(q)}_{\sup(\T_{[t_0,t_{k-1}]})} \leq \mathbb{E}^{\nu}\left[\boldsymbol{\Gamma}_{\pi^{(q)}}^{(t_0,t_{k-1})\mapsto(t_k)}\left(  \boldsymbol{\pi}_{\T_{[t_0,t_{k-1}]}}^{(q)} \right)  \,\, \left| \,\, \G_{\T_{[t_0,t_{k-1}]}} \right.\right],  
\end{align}
which implies that
\begin{align}
  \pi^{(q)}_{t_{k-1}}  \leq \mathbb{E}^{\nu}\left[\pi^{(q)}_{t_{k}}  \,\, \left| \,\, \G_{\T_{[t_0,t_{k-1}]}} \right.\right]  \label{supermartingalepercforquantumtwo}
\end{align}
for every $q\in\mathcal{Q}$ at that $t_k\in\T\setminus\{t_0\}$. In addition, since $x \mapsto \log(x)$ is monotonic, by use of (\ref{supermartingalepercforquantumtwo}) we have 
\begin{align}
\log\left( \pi^{(q)}_{t_{k-1}}\right)  \leq \log\left(\mathbb{E}^{\nu}\left[\pi^{(q)}_{t_{k}}  \,\, \left| \,\, \G_{\T_{[t_0,t_{k-1}]}} \right.\right]\right),  \label{supermartingalepercforquantumtwologarithm}
\end{align}
for every $q\in\mathcal{Q}$ at that $t_k\in\T\setminus\{t_0\}$. Since $x \mapsto x\log(x)$ is strictly convex, using the conditional form of Jensen's inequality, we see that
\begin{align}
\mathbb{E}^{\nu}\left[\pi^{(q)}_{t_{k}}  \,\, \left| \,\, \G_{\T_{[t_0,t_{k-1}]}} \right.\right]\log\left(\mathbb{E}^{\nu}\left[\pi^{(q)}_{t_{k}}  \,\, \left| \,\, \G_{\T_{[t_0,t_{k-1}]}} \right.\right]\right) < \mathbb{E}^{\nu}\left[\pi^{(q)}_{t_{k}}\log\left(\pi^{(q)}_{t_{k}} \right)  \,\, \left| \,\, \G_{\T_{[t_0,t_{k-1}]}} \right.\right]  \label{jensensinequality}
\end{align}
for every $q\in\mathcal{Q}$ at that $t_k\in\T\setminus\{t_0\}$. Thus, using (\ref{supermartingalepercforquantumtwo}), (\ref{supermartingalepercforquantumtwologarithm}) and (\ref{jensensinequality}), we obtain
\begin{align}
\pi^{(q)}_{t_{k-1}}\log\left( \pi^{(q)}_{t_{k-1}}\right) < \mathbb{E}^{\nu}\left[\pi^{(q)}_{t_{k}}\log\left(\pi^{(q)}_{t_{k}} \right)  \,\, \left| \,\, \G_{\T_{[t_0,t_{k-1}]}} \right.\right]  \label{jensensinequalitytwo}
\end{align}
for every $q\in\mathcal{Q}$ at that $t_k\in\T\setminus\{t_0\}$. Hence, from (\ref{initialshannonentropyexpression}) and (\ref{jensensinequalitytwo}), we have
\begin{align}
\mathbb{E}^{\nu^{[t_{0},t_{k-1}]}}\left[\mathcal{S}_{t_{k-1}}\right] = \sum_{q\in\mathcal{Q}} \pi^{(q)}_{t_{k-1}}\log\left( \pi^{(q)}_{t_{k-1}}\right) < \sum_{q\in\mathcal{Q}}\mathbb{E}^{\nu}\left[\pi^{(q)}_{t_{k}}\log\left(\pi^{(q)}_{t_{k}} \right)  \,\, \left| \,\, \G_{\T_{[t_0,t_{k-1}]}} \right.\right] . \label{shannonfirstexpectation}
\end{align}
Since $\mathbb{E}^{\nu}[\,\cdot \,]$ is a linear operator, we can write
\begin{align}
\sum_{q\in\mathcal{Q}}\mathbb{E}^{\nu}\left[\pi^{(q)}_{t_{k}}\log\left(\pi^{(q)}_{t_{k}} \right)  \,\, \left| \,\, \G_{\T_{[t_0,t_{k-1}]}} \right.\right]  &= \mathbb{E}^{\nu}\left[\sum_{q\in\mathcal{Q}}\pi^{(q)}_{t_{k}}\log\left(\pi^{(q)}_{t_{k}} \right)  \,\, \left| \,\, \G_{\T_{[t_0,t_{k-1}]}} \right.\right] \notag \\
&= \mathbb{E}^{\nu^{[t_{0},t_{k-1}]}}\left[\mathcal{S}_{t_{k}}\right]. \label{shannonfirstexpectationtwo}
\end{align}
by employing (\ref{expectedshannonentropyexpression}). Therefore, from (\ref{shannonfirstexpectation}) and (\ref{shannonfirstexpectationtwo}), we have
\begin{align}
\mathbb{E}^{\nu^{[t_{0},t_{k-1}]}}\left[\mathcal{S}_{t_{k-1}}\right] < \mathbb{E}^{\nu^{[t_{0},t_{k-1}]}}\left[\mathcal{S}_{t_{k}}\right] 
\end{align}
at that $t_k\in\T\setminus\{t_0\}$. The claimed inequality \eqref{shannondecrease} then follows from (\ref{initialshannonentropyexpression}).
\end{proof}
If we take Shannon-Wiener information as a measure of negative uncertainty in the system (since it is negative entropy), then Proposition \ref{mainpropositiontwo} shows that the existence of a submartingale projection on the diagonal magnitudes of an observed ensemble of density matrices implies expected information gain in that system, given that Shannon-Wiener information is expected to increase as it progresses over $\T$. 
\begin{rem}
\label{remintersection} {\em
Note that $\mathbb{H}^{(-)}_{\nu,\G}(\boldsymbol{\Lambda}(\mathbb M \,, \T))$ encodes non-increasing behaviour from {\rm (\ref{supermartingaleperc})} and $\mathbb{H}^{(+)}_{\nu,\G}(\boldsymbol{\Lambda}(\mathbb M \,, \T))$ encodes non-decreasing behaviour from {\rm (\ref{submartingaleperc})}. Accordingly, $\mathbb{H}^{(-)}_{\nu,\G}(\boldsymbol{\Lambda}(\R_+ \,, \T))$ manifests decreasing off-diagonals and $\mathbb{H}^{(+)}_{\nu,\G}(\boldsymbol{\Lambda}(\R_+ \,, \T))$ manifests increasing information. }
\end{rem}
Remark \ref{remintersection} presents an intuitive insight as to how supermartingale projections relate to decreasing quantum phenomenon and how submartingale projections relate to increasing quantum phenomenon in open systems. This further sheds light on a connection between superharmonic functions from potential theory and quantum decoherence, as well as between subharmonic functions and information gain -- the details of which we leave as a project for the future.
\begin{prop}
\label{mainpropositiongeneralcombine}
Let $\left|\psi_{t_{k}}\right\rangle$ admit a path transformation of the form in {\rm(\ref{rewritetransformation})} for a fixed $t_k\in\T\setminus\{t_0\}$. If there exists a projection $\boldsymbol{\Gamma}_{\pi^{(q)}}^{(t_0,t_{k-1})\mapsto(t_k)}: \boldsymbol{\Lambda}(\R_+ \,, \T) \rightarrow \boldsymbol{\Lambda}(\R_+ \,, \T)$ where 
\begin{align}
\pi^{(q)}_{t_{k}} = \boldsymbol{\Gamma}_{\pi^{(q)}}^{(t_0,t_{k-1})\mapsto(t_k)}\left(  \boldsymbol{\pi}_{\T_{[t_0,t_{k-1}]}}^{(q)} \right) \label{finalpi}
\end{align}
and $\boldsymbol{\Gamma}_{\pi^{(q)}}^{(t_0,t_{k-1})\mapsto(t_k)} \in \mathbb{H}_{\nu,\G}(\boldsymbol{\Lambda}(\R_+ \,, \T))$ for every $q\in\mathcal{Q}$, then
\begin{align}
\left| \mathbb{E}^{\nu^{[t_{0},t_{k-1}]}}\left[\Psi_{t_{k}}^{(i,j)}\right] \right| < \left| \Psi_{t_{k-1}}^{(i,j)} \right| \label{finaldecoherence}
\end{align}
for every $i,j\in\mathcal{Q}$ where $i\neq j$, and 
\begin{align}
\mathcal{S}_{t_{k-1}} < \mathbb{E}^{\nu^{[t_{0},t_{k-1}]}}\left[\mathcal{S}_{t_{k}}\right] \label{finalentropy}
\end{align}
at that $t_k\in\T\setminus\{t_0\}$.
\end{prop}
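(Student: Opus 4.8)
The plan is to recognise Proposition \ref{mainpropositiongeneralcombine} as the conjunction of Propositions \ref{mainproposition} and \ref{mainpropositiontwo} under the sharper hypothesis, the bridge being the set-theoretic identity $\mathbb{H}_{\nu}(\boldsymbol{\Lambda}(\R_+ \,, \T)) = \mathbb{H}^{(-)}_{\nu}(\boldsymbol{\Lambda}(\R_+ \,, \T)) \cap \mathbb{H}^{(+)}_{\nu}(\boldsymbol{\Lambda}(\R_+ \,, \T))$ recorded right after the definition of $\mathbb{H}_{\nu}$. First I would note that the assumption $\boldsymbol{\Gamma}_{\pi^{(q)}}^{(t_0,t_{k-1})\mapsto(t_k)} \in \mathbb{H}_{\nu}(\boldsymbol{\Lambda}(\R_+ \,, \T))$ for every $q\in\mathcal{Q}$ immediately yields, via that identity, both $\boldsymbol{\Gamma}_{\pi^{(q)}}^{(t_0,t_{k-1})\mapsto(t_k)} \in \mathbb{H}^{(-)}_{\nu}(\boldsymbol{\Lambda}(\R_+ \,, \T))$ and $\boldsymbol{\Gamma}_{\pi^{(q)}}^{(t_0,t_{k-1})\mapsto(t_k)} \in \mathbb{H}^{(+)}_{\nu}(\boldsymbol{\Lambda}(\R_+ \,, \T))$ for every $q\in\mathcal{Q}$.

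Next, with the supermartingale-projection membership in hand, the hypotheses of Proposition \ref{mainproposition} are met verbatim: the path transformation \eqref{rewritetransformation} is assumed, the projection \eqref{finalpi} is the same object as \eqref{projectionforpi}, and the standing linear-independence assumption on $\pi_{t_k}^{(i)},\pi_{t_k}^{(j)}$ for $i\neq j$ is in force, so the decoherence bound \eqref{finaldecoherence} follows at once. With the submartingale-projection membership, the hypotheses of Proposition \ref{mainpropositiontwo} are likewise satisfied, and the information-gain inequality \eqref{finalentropy} follows. That would complete the argument.

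The one place calling for a word of care — and it is slight — is to confirm that the two invocations place no conflicting demand on the same object: Proposition \ref{mainproposition} exploits $\mathbb{E}^{\nu}[\pi_{t_k}^{(q)} \mid \G^{\boldsymbol{\pi}^{(q)}}_{\T_{[t_0,t_{k-1}]}}] \leq \pi_{t_{k-1}}^{(q)}$ while Proposition \ref{mainpropositiontwo} exploits the reverse inequality, and under the martingale-projection hypothesis these fuse to the equality $\mathbb{E}^{\nu}[\pi_{t_k}^{(q)} \mid \G^{\boldsymbol{\pi}^{(q)}}_{\T_{[t_0,t_{k-1}]}}] = \pi_{t_{k-1}}^{(q)}$. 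Crucially, in the cited proofs the \emph{strict} inequalities \eqref{finaldecoherence} and \eqref{finalentropy} do not come from this conditional expectation being strictly ordered but from a strict H\"older estimate against the auxiliary variable $Y$ (respectively a strict conditional Jensen estimate for the strictly convex $x\mapsto x\log x$), so collapsing the ordering to an equality does nothing to those strict steps and both conclusions survive intact. I therefore expect there to be essentially no obstacle here: the substantive analysis was already carried out in Propositions \ref{mainproposition} and \ref{mainpropositiontwo}, and the present statement is merely their simultaneous specialisation.
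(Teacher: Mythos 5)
Your proof is correct and takes essentially the same route as the paper's own: both arguments invoke the identity $\mathbb{H}_{\nu}(\boldsymbol{\Lambda}(\R_+ \,, \T)) = \mathbb{H}^{(-)}_{\nu}(\boldsymbol{\Lambda}(\R_+ \,, \T)) \cap \mathbb{H}^{(+)}_{\nu}(\boldsymbol{\Lambda}(\R_+ \,, \T))$ and then reduce the claim to Propositions \ref{mainproposition} and \ref{mainpropositiontwo}. Your closing observation that the strict inequalities originate in the H\"older and Jensen steps rather than in the ordering hypothesis is a sound extra check, though the paper does not find it necessary to make it.
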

\begin{proof}
Since (\ref{finalpi}) holds and 
\begin{align}
\boldsymbol{\Gamma}_{\pi^{(q)}}^{(t_0,t_{k-1})\mapsto(t_k)} \in \mathbb{H}_{\nu,\G}(\boldsymbol{\Lambda}(\R_+ \,, \T)) = \mathbb{H}^{(-)}_{\nu,\G}(\boldsymbol{\Lambda}(\R_+ \,, \T)) \,\, \bigcap \,\, \mathbb{H}^{(+)}_{\nu,\G}(\boldsymbol{\Lambda}(\R_+ \,, \T)) 
\end{align}
for every $q\in\mathcal{Q}$ at that $t_k\in\T\setminus\{t_0\}$, the inequalities (\ref{finaldecoherence}) and (\ref{finalentropy}) follow from Propositions \ref{mainproposition} and \ref{mainpropositiontwo}, respectively.
\end{proof}
Proposition \ref{mainpropositiongeneralcombine} demonstrates that the existence of a martingale projection on the diagonal magnitudes of an observed ensemble of density matrices implies both expected decoherence and information gain in that system. In all the results thus far, we highlight statements of the form \emph{at that $t_k\in\T\setminus\{t_0\}$}, since super/sub-martingale projections may not exist for every interaction between the system and the environment -- this creates flexibility in compartmentalizing quantum interactions across time, instead of asking such features from the \emph{entire} process. 
Hence, our results show that if/when a system-environment interaction manifests a martingale projection on the density matrix, then one can expect both decoherence and information gain as an outcome of that very interaction. If such features do exist for the entire process, then we obtain the following:
\begin{coro}
\label{martingalestatement}
Let $\left|\psi_{t_{k}}\right\rangle$ admit a path transformation of the form in {\rm (\ref{rewritetransformation})} for every $t_k\in\T\setminus\{t_0\}$. If $\{\pi^{(q)}_{t_{k}}\}_{t_k\in\T}$ is a $(\nu,\G)$-supermartingale for every $q\in\mathcal{Q}$, then
\begin{align}
\left| \mathbb{E}^{\nu^{[t_{0},t_{k-1}]}}\left[\Psi_{t_{k}}^{(i,j)}\right] \right| < \left| \Psi_{t_{k-1}}^{(i,j)} \right| \label{finaldecoherencemartingale}
\end{align}
for every $i,j\in\mathcal{Q}$ where $i\neq j$ for every $t_k\in\T\setminus\{t_0\}$. If $\{\pi^{(q)}_{t_{k}}\}_{t_k\in\T}$ is a $(\nu,\G)$-submartingale for every $q\in\mathcal{Q}$, then 
\begin{align}
\mathcal{S}_{t_{k-1}} < \mathbb{E}^{\nu^{[t_{0},t_{k-1}]}}\left[\mathcal{S}_{t_{k}}\right] \label{finalentropymartingale}
\end{align}
for every $t_k\in\T\setminus\{t_0\}$. If $\{\pi^{(q)}_{t_{k}}\}_{t_k\in\T}$ is a $(\nu,\G)$-martingale for every $q\in\mathcal{Q}$, then both {\rm (\ref{finaldecoherencemartingale})} and {\rm (\ref{finalentropymartingale})} hold for every $t_k\in\T\setminus\{t_0\}$.
\end{coro}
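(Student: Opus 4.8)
The plan is to reduce the three assertions to the already-proved Propositions~\ref{mainproposition}, \ref{mainpropositiontwo} and \ref{mainpropositiongeneralcombine}. Concretely, for a fixed $q\in\mathcal{Q}$ and a fixed $t_k\in\T\setminus\{t_0\}$ I will show that the classical (super/sub/)martingale property of $\{\pi^{(q)}_{t_k}\}_{t_k\in\T}$ produces exactly the coordinate projection hypothesised in those propositions, lying respectively in $\mathbb{H}^{(-)}_\nu(\boldsymbol{\Lambda}(\R_+,\T))$, $\mathbb{H}^{(+)}_\nu(\boldsymbol{\Lambda}(\R_+,\T))$ or $\mathbb{H}_\nu(\boldsymbol{\Lambda}(\R_+,\T))$; the conclusions then follow verbatim, and since $t_k$ was arbitrary in $\T\setminus\{t_0\}$ they hold for every such $t_k$.

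First I would produce the projection. The hypothesis that $|\psi_{t_k}\rangle$ admits a path transformation of the form (\ref{rewritetransformation}) supplies $\boldsymbol{\Gamma}^{(t_0,t_{k-1})\mapsto(t_k)}$ on $\boldsymbol{\Lambda}(\mathbb{H},\T)$; post-composing it with the Born map $|\phi\rangle\mapsto|\langle E^{(q)}|\phi\rangle|^2$, which recovers $\pi^{(q)}_{t_k}=|\langle E^{(q)}|\psi_{t_k}\rangle|^2$ as a Borel functional of $|\psi_{t_k}\rangle$, and extending off the realised paths in any boundedness-preserving way, yields a projection $\boldsymbol{\Gamma}_{\pi^{(q)}}^{(t_0,t_{k-1})\mapsto(t_k)}:\boldsymbol{\Lambda}(\R_+,\T)\to\boldsymbol{\Lambda}(\R_+,\T)$ satisfying (\ref{projectionforpi}); its output is a single $\R_+$-valued random variable attached to $\{t_k\}$, i.e.\ a singleton path, so it is a projection path transformation (cf.\ (\ref{projectionformula})). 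It then remains to check the two requirements of Definition~\ref{martingaletransformationdefinitionmain}.

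The order requirement is a direct transcription of the classical hypothesis at the pair $(t_{k-1},t_k)$: in the supermartingale case,
\begin{align}
\mathbb{E}^\nu\!\left[\,\pi^{(q)}_{t_k}\,\left|\,\G^{\boldsymbol{\pi}^{(q)}}_{\T_{[t_0,t_{k-1}]}}\right.\right]\;\le\;\pi^{(q)}_{t_{k-1}}\;=\;\pi^{(q)}_{\sup(\T_{[t_0,t_{k-1}]})},
\end{align}
which is (\ref{supermartingaleperc}) with $p=t_0$, $r=t_{k-1}$ and $X_r=\pi^{(q)}_{t_{k-1}}$; the submartingale and martingale cases are the reversed and two-sided versions. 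The boundedness-preserving requirement, $\boldsymbol{\Gamma}_{\pi^{(q)}}^{(t_0,t_{k-1})\mapsto(t_k)}\in\mathbb{B}_\nu(\boldsymbol{\Lambda}(\R_+,\T))$, is where the choice $\E=\R_+$ is used: since $\pi^{(q)}_{t_k}\ge 0$, the conditional expectation displayed above is squeezed between $0$ and the input coordinate $\pi^{(q)}_{t_{k-1}}$ in the supermartingale case, while in general the $L^1$-integrability built into the classical (super/sub)martingale definition keeps both $\pi^{(q)}_{t_{k-1}}$ and $\mathbb{E}^\nu[\pi^{(q)}_{t_k}\,|\,\G^{\boldsymbol{\pi}^{(q)}}_{\T_{[t_0,t_{k-1}]}}]$ in $L^1$, hence in a common bounded subset of $\boldsymbol{\Lambda}(\R_+,\T)$; so bounded input paths are sent into a bounded subset. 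Combining, $\boldsymbol{\Gamma}_{\pi^{(q)}}^{(t_0,t_{k-1})\mapsto(t_k)}$ belongs to $\mathbb{H}^{(-)}_\nu$ (resp.\ $\mathbb{H}^{(+)}_\nu$, resp.\ $\mathbb{H}_\nu$) on $\boldsymbol{\Lambda}(\R_+,\T)$ for every $q\in\mathcal{Q}$, so Proposition~\ref{mainproposition} gives (\ref{finaldecoherencemartingale}), Proposition~\ref{mainpropositiontwo} gives (\ref{finalentropymartingale}), and in the martingale case --- where $\{\pi^{(q)}_{t_k}\}$ is at once a super- and a submartingale --- Proposition~\ref{mainpropositiongeneralcombine} gives both.

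I expect the only non-mechanical point to be the boundedness-preserving check: it is exactly there that non-negativity of the diagonal magnitudes (i.e.\ that we work on $\boldsymbol{\Lambda}(\R_+,\T)$ rather than on a general $\boldsymbol{\Lambda}(\mathbb{M},\T)$), together with the $L^1$-control from the classical (super/sub)martingale hypothesis, is needed; once that is in place the argument is a routine translation of the pointwise-in-time conditions (\ref{classicmartingale}) into the per-interaction conditions of Definition~\ref{martingaletransformationdefinitionmain}, followed by citation of the three propositions.
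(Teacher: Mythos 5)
Your proposal is correct and follows the route the paper intends: the paper states this corollary without any proof, treating it as an immediate consequence of Propositions~\ref{mainproposition}, \ref{mainpropositiontwo} and \ref{mainpropositiongeneralcombine} once one observes that the classical (super/sub)martingale property supplies, at every $t_k\in\T\setminus\{t_0\}$, exactly the one-step conditional-expectation inequality \eqref{supermartingalepercforquantum} (resp.\ \eqref{supermartingalepercforquantumtwo}) required of the coordinate projections. Your additional care in constructing $\boldsymbol{\Gamma}_{\pi^{(q)}}^{(t_0,t_{k-1})\mapsto(t_k)}$ via the Born map and in checking the boundedness-preserving condition goes beyond what the paper records, but it is the same reduction, not a different argument.
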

Corollary \ref{martingalestatement} places supermartingales at a central position in understanding open quantum systems expected to decohere throughout their entire evolution as a consequence of their interactions with their environments. In addition, Corollary \ref{martingalestatement} shows how submartingales play a role in evaluating the expected direction of information on a continual basis. Accordingly, we see that martingales \emph{glue} a persistent relationship between decoherence and information gain dynamics for open quantum systems. 

\begin{rem}
\label{finalremarkconnection}{\em
Remark \ref{stochasticassociation} and Corollary \ref{martingalestatement} connect when martingales can be identified as collections of law-consistent path transformations; e.g. a family of martingale projections $\mathcal{P}_{(q)}^{(t_0,t_0)}$ exists that generates a $(\nu,\G)$-martingale $\{\pi^{(q)}_{t_{k}}\}_{t_k\in\T}$ from $\pi^{(q)}_{t_{0}}$ for $q\in\mathcal{Q}$.}
\end{rem}

We note in particular that martingales play a pivotal role in energy-based state reduction dynamics in quantum measurement theory \cite{Gisin 1989, Adler et al 2001, Brody Hughston 2002, Menguturk 2016, Brody Hughston 2018, Menguturk Menguturk 2020, Brody Hughston 2023, Menguturk 2024}. Accordingly, we close this chapter by drawing attention to the following observation. In the theory of quantum measurement, martingales have been interpreted in the references cited as randomized analogues for the conservation law of energy in expectation. In the theory of quantum decoherence, martingales can again be interpreted as randomized analogues for the conservation law of energy in expectation. This remark is suggestive of a potentially deeper consolidation of quantum measurement and quantum decoherence through martingale constructs, which we hope to explore in more detail elsewhere. Herein, we shall nonetheless provide a brief elaboration of this point and highlight the existence of martingale processes $\{\bar{\pi}^{(q)}_{t}\}_{t\in[0,T]}$ in stochastic \Schrodinger evolutions in relation to quantum states represented as in (\ref{mappingexpression}). More specifically, denoting $\bar{\T} = [0,T]$ for $T < \infty$, each $\{\bar{\pi}^{(q)}_{t}\}_{t\in\bar{\T}}$ satisfies a stochastic differential equation given by 
\begin{align}
\dd\bar{\pi}^{(q)}_{t} = \sigma^{(q)}_{t}\bar{\pi}^{(q)}_{t}\dd W^{(q)}_t, \label{sdeformeasure}
\end{align}
for $q\in\mathcal{Q}$ and $t\in[0,T)$, where $\{W^{(q)}_t\}_{t\in\T}$ is a standard $\nu$-Brownian motion, and the detailed mathematical expressions for $\{\sigma^{(q)}_t\}_{t\in[0,T)}$ can be found in the references cited above. Note that (\ref{sdeformeasure}) has a closed-form analytical solution given by the following:
\begin{align}
\label{exponentialmartingalerep}
\bar{\pi}^{(q)}_{t} = \bar{\pi}^{(q)}_{0}\exp\left(- \frac{1}{2}\int_0^t \left(\sigma^{(q)}_{s}\right)^2\dd s + \int_0^t \sigma^{(q)}_{s}\dd W^{(q)}_s \right),
\end{align}
for $q\in\mathcal{Q}$ and $t\in[0,T)$. This can also be viewed as a family of martingale projections of $\bar{\pi}^{(q)}_{0}$ in the spirit of Remark \ref{finalremarkconnection} -- accordingly, (\ref{exponentialmartingalerep}) for $t\in[0,T)$ can be considered a collection of path transformations implicitly linked to the dynamical representation (\ref{sdeformeasure}). Thus, using (\ref{exponentialmartingalerep}), we can write
\begin{align}
\boldsymbol{\Gamma}^{(0,u)\mapsto(t)}\left(  \boldsymbol{\bar{\pi}}^{(q)}_{\bar{\T}_{[0,u]}} \right) = \bar{\pi}^{(q)}_{u}\exp\left(- \frac{1}{2}\int_u^t \left(\sigma^{(q)}_{s}\right)^2\dd s + \int_u^t \sigma^{(q)}_{s}\dd W^{(q)}_s \right), \notag
\end{align}
for $q\in\mathcal{Q}$ and every $u\leq t\in[0,T)$. Therefore, we have the martingale projection criterion
\begin{align}
\mathbb{E}^{\nu}\left[\boldsymbol{\Gamma}^{(0,u)\mapsto(t)}\left(  \boldsymbol{\bar{\pi}}^{(q)}_{\bar{\T}_{[0,u]}} \right)  \,\, \left| \,\, \G^{\bar{\pi}^{(q)}}_{\bar{\T}_{[0,u]}} \right.\right] = \bar{\pi}^{(q)}_{u}, \notag
\end{align}
for every $u\leq t\in[0,T)$. Moreover, if we define the Shannon-Wiener information process associated to $\{\bar{\pi}^{(q)}_{t}\}_{t\in\bar{\T}}$ by
\begin{align}
\bar{S}_t = \sum_{q\in\mathcal{Q}}\bar{\pi}^{(q)}_{t} \log\left( \bar{\pi}^{(q)}_{t} \right), \label{newcontinuousshannon}
\end{align}
for $t\in\bar{\T}$, as we have done for $\{\pi^{(q)}_{t_{k}}\}_{t_k\in\T}$ in (\ref{expectedshannonentropyexpression}), one can show that $\{\bar{S}_t\}_{t\in[0,T)}$ is increasing on average by using (\ref{exponentialmartingalerep}) and applying It\^o's formula to (\ref{newcontinuousshannon}) -- for which the detailed proof is beyond the scope of this paper -- where the following stochastic differential equation for $\{\bar{S}_t\}_{t\in[0,T)}$ arises
\begin{align}
\dd\bar{S}_t = \bar{\mu}_{t}\dd t + \bar{\gamma}_{t}\dd \bar{W}_t, \notag
\end{align}
where $\{\bar{W}_t\}_{t\in\T}$ is a standard $\nu$-Brownian motion and $\bar{\mu}_t > 0$ for every $t\in[0,T)$. Note that $\{\bar{S}_t\}_{t\in[0,T)}$ behaves in agreement with (\ref{finalentropymartingale}) from Corollary \ref{martingalestatement}. Accordingly, since $\{\bar{\pi}^{(q)}_{t}\}_{t\in\bar{\T}}$ from (\ref{exponentialmartingalerep}) is a martingale for every $q\in\mathcal{Q}$ and the Shannon-Wiener information process $\{\bar{S}_{t}\}_{t\in\bar{\T}}$ from (\ref{newcontinuousshannon}) is increasing on average, martingale constructs hint towards a unified affiliation between quantum state reduction, quantum decoherence and information gain. 

%CONCLUSION
\section{Conclusion}
It may be helpful if we comment briefly on the significance of the present work. In this paper, we have constructed what we call super/sub-martingale projections as random endomorphisms on path-valued random variables. As part of this objective, we have introduced path transformations on Polish spaces, which form a general framework that allows us to host super/sub-martingale projections within a broader landscape. We envision that the mathematical setting presented concisely in this work can foster further such rigorous analysis in future.

The main goal of this paper has been to study decoherence and information-theoretic behaviour of open quantum systems, which we have approached through super/sub-martingale projections. We have shown that the existence of martingale projections in an open quantum system implies both expected decoherence and expected information gain in that system. 

Towards deepening our understanding of quantum decoherence, supermartingale projections enable us to identify a fundamental condition on system-environment interactions as path-transformations, since they naturally induce expected non-increasing behaviour over possibly compartmentalised time segments. Consequentially, these projections allow us to connect classical supermartingales from stochastic analysis with quantum decoherence under a specific setup. In the same vein, submartingale projections (and in special cases classical submartingales) enable us to connect non-decreasing behaviour with the direction of information in open quantum systems. Accordingly, this paper places martingale projections, and thereby classical martingales, in a central location to merge a deep relation between quantum decoherence and information.

Finally, we discussed how martingales can provide a consolidated connection between quantum state reduction, quantum decoherence and information gain. Accordingly, by interpreting martingale constructs as manifestations of the conservation law of energy in expectation, we may further enrich our understanding of these phenomena in terms of a fundamental principle in physics.

%APPENDIX
\appendix
\section{Appendix}
Since several involved notations appear throughout the paper, we provide a table of key symbols for the convenience of the reader.
\begin{table}[ht]
\centering
\begin{tabular}{ | p{2cm} | p{4cm} |}
\hline
\,\,\,\textbf{Notation} & \,\,\,\,\,\,\,\, \textbf{Explanation} \\
\hline
\,\,\,\,\,\,\,\,\,\,  $\M$ & \,\,\,\,\,\,\,\, Polish space \\
\hline
\,\,\,\,\,\,\,\,\,\,  $\T$ & \,\,\,\,\,\,\,\, time horizon \\
\hline
\,\,\,\,\,\,\,\,\, $\boldsymbol{X}_{\T}$ & \,\,\,\,\,\,\,\, $\M$-valued path \\
\hline
\,\, $\boldsymbol{\Lambda}(\mathbb M \,, \T)$ & \,\,\,\,\,\,\,\, path space \\
\hline
\,\,\,\,\,\,\,\,\,\, $\G_{\T}$ & \,\,\,\,\,\,\,\,\,\,\, $\sigma$-algebra \\
\hline
\,\,\,\,\,\,\,\,\,\, $\nu$ & \, probability measure \\
\hline
\,\,\,\,\,\,\,\,\,\, $\mathbb{E}$ & \, expectation operator \\
\hline
\, $\boldsymbol{\Gamma}^{(p,r)\mapsto(s,t)}$ & \, path transformation \\
\hline
\end{tabular}
%\caption{Symbols}	
\end{table}
\\
We shall also give some examples of the different properties that path transformations from Definition \ref{transformationsgeneral} can display. We note that the examples below are far from being exhaustive.
\begin{ex}
\emph{For $p\leq r \in \T$, if a path transformation satisfies
\begin{align}
\boldsymbol{X}_{\T_{[p,r]}} = \boldsymbol{\Gamma}^{(p,r)\mapsto(p,r)}\left( \boldsymbol{X}_{\T_{[p,r]}}\right), 
\end{align}
then $\boldsymbol{\Gamma}^{(p,r)\mapsto(p,r)}$ is a path-preserving transformation on $\boldsymbol{\Lambda}(\mathbb M \,, \T)$; an identity-transformation.}
\end{ex}

\begin{ex}
\emph{For $p\leq r \in \T$ and $s\leq t \in \T$, if a path transformation satisfies
\begin{align}
\boldsymbol{\Gamma}^{(p,r)\mapsto(s,t)}\left( \boldsymbol{X}_{\T_{[p,r]}}\right) = \boldsymbol{\Gamma}^{(u,v)\mapsto(s,t)}\left( \boldsymbol{X}_{\T_{[u,v]}}\right), \hspace{0.1in} \text{for $\T_{[u,v]}\subset \T_{[p,r]}$} 
\end{align}
then $\boldsymbol{\Gamma}^{(p,r)\mapsto(s,t)}$ is a reducible path transformation on $\boldsymbol{\Lambda}(\mathbb M \,, \T)$ from $\T_{[p,r]}$ onto $\T_{[u,v]}$.}
\end{ex}

\begin{ex}
\emph{For $p\leq r \in \T$ and $s\leq t \in \T$, if an endomorphism 
\begin{align}
\left(\boldsymbol{\Gamma}^{(p,r)\mapsto(s,t)}\right)^{*} = \boldsymbol{\Gamma}_{*}^{(s,t)\mapsto(p,r)} \,: \, \boldsymbol{\Lambda}(\mathbb M \,, \T) \rightarrow \boldsymbol{\Lambda}(\mathbb M \,, \T) 
\end{align}
exists such that
\begin{align}
\boldsymbol{Y}_{\T_{[s,t]}} = \boldsymbol{\Gamma}^{(p,r)\mapsto(s,t)}\left( \boldsymbol{X}_{\T_{[p,r]}}\right) \, \implies \, \boldsymbol{X}_{\T_{[p,r]}} = \boldsymbol{\Gamma}_{*}^{(s,t)\mapsto(p,r)}\left(\boldsymbol{Y}_{\T_{[s,t]}}\right), 
\end{align}
then $\boldsymbol{\Gamma}^{(p,r)\mapsto(s,t)}$ is an invertible path transformation on $\boldsymbol{\Lambda}(\mathbb M \,, \T)$ between $\T_{[p,r]}$ and $\T_{[s,t]}$.}
\end{ex}

\begin{ex}
\emph{We present two path transformations directly motivated by the path-operations of \cite{Dupire 2009, Cont Fournie 2010} used in functional-It\^o calculus. The first path operation is the \emph{vertical bump} given by
\begin{equation}
\label{dupirevertical}
    \boldsymbol{Y}_{\T_{u,t}}(s) =
\begin{cases}
      X_s, & \text{if}\ s\in [u,t) \\
      X_t + \epsilon, & \text{if}\ s = t,
\end{cases}
\end{equation}
for some $\epsilon > 0$, with $\boldsymbol{X}_{\T_{u,t}}\in \boldsymbol{\Lambda}(\R , \T)$ and $\boldsymbol{Y}_{\T_{u,t}}(s)\in\R$ is the value of $\boldsymbol{Y}_{\T_{u,t}}\in \boldsymbol{\Lambda}(\R , \T)$ at point $s\in\T_{u,t}$. The second path operation is the \emph{horizontal stretch} given by
\begin{equation}
\label{dupirehext}
    \boldsymbol{Z}_{\T_{u, t + \alpha}}(s) =
\begin{cases}
			X_s, & \text{if}\ s\in [u,t] \\
      X_t, & \text{if}\ s\in(t,t+\alpha],
\end{cases}
\end{equation}
for some $\alpha> 0$, such that
\begin{align}
t+\alpha \in \T \,\,\, \Rightarrow \,\,\, \boldsymbol{Z}_{\T_{u,t+\alpha}}\in \boldsymbol{\Lambda}(\R , \T). 
\end{align}
The operations in (\ref{dupirevertical}) and (\ref{dupirehext}) form the building blocks of functional-derivative operators in the sense of Dupire \cite{Dupire 2009}. Using Definition \ref{transformationsgeneral}, one can define $\boldsymbol{\Gamma}_{\epsilon}$ and $\boldsymbol{\Gamma}_{\alpha}$ with
\begin{align}
\boldsymbol{Y}_{\T_{u,t}} = \boldsymbol{\Gamma}_{\epsilon}^{(u,t)\mapsto(u,t)}\left( \boldsymbol{X}_{\T_{[u,t]}}\right) \hspace{0.1in} \text{and} \hspace{0.1in}\boldsymbol{Z}_{\T_{u,t+\alpha}} = \boldsymbol{\Gamma}_{\alpha}^{(u,t)\mapsto(u,t+\alpha)}\left( \boldsymbol{X}_{\T_{[u,t]}}\right) 
\end{align}
as the \emph{vertical bump} in (\ref{dupirevertical}), and the \emph{horizontal stretch} in (\ref{dupirehext}), respectively. We should highlight that
\begin{align}
\left(  \boldsymbol{\Gamma}_{\alpha}^{(u,t)\mapsto(u,t+\alpha)}  \circ  \boldsymbol{\Gamma}_{\epsilon}^{(u,t)\mapsto(u,t)} \right) \left( \boldsymbol{X}_{\T_{[u,t]}}\right) \neq \left( \boldsymbol{\Gamma}_{\epsilon}^{(u,t+\alpha)\mapsto(u,t+\alpha)}  \circ  \boldsymbol{\Gamma}_{\alpha}^{(u,t)\mapsto(u,t+\alpha)} \right) \left( \boldsymbol{X}_{\T_{[u,t]}}\right)
\end{align}
for a fixed $\boldsymbol{X}_{\T_{u,t}}\in \boldsymbol{\Lambda}(\R , \T)$, and therefore, $\boldsymbol{\Gamma}_{\epsilon}$ and $\boldsymbol{\Gamma}_{\alpha}$ are \emph{non-commutative} when defined as in (\ref{dupirevertical}) and (\ref{dupirehext}), respectively.}
\end{ex}

\begin{ex}
\emph{Let $\boldsymbol{\Gamma}_{1}$ and $\boldsymbol{\Gamma}_{2}$ be two path transformations on $\boldsymbol{\Lambda}(\M \,, \T)$ as in Definition \ref{transformationsgeneral}. If it holds that
\begin{align}
\left(  \boldsymbol{\Gamma}_{1} \,\, \circ \,\, \boldsymbol{\Gamma}_{2} \right) = \left( \boldsymbol{\Gamma}_{2} \,\, \circ \,\, \boldsymbol{\Gamma}_{1} \right), 
\end{align}
then $\boldsymbol{\Gamma}_{1}$ and $\boldsymbol{\Gamma}_{2}$ are commutative on $\boldsymbol{\Lambda}(\M \,, \T)$.
Note that if we modify the \emph{vertical bump} in (\ref{dupirevertical}) slightly such that
\begin{equation}
\label{dupireverticalmodified}
    \boldsymbol{Y}_{\T_{u,t}}(s) =
\begin{cases}
      X_s, & \text{if}\ s\in [u,t] \setminus \{\tau^*\} \\
      X_s + \epsilon, & \text{if}\ s = \tau^*,
\end{cases}
\end{equation}
where $u < \tau^* < t$, then by associating $\boldsymbol{\hat{\Gamma}}_{\epsilon}$ with (\ref{dupireverticalmodified}), instead of (\ref{dupirevertical}), we get the following:
\begin{align}
\left(  \boldsymbol{\Gamma}_{\alpha}^{(u,t)\mapsto(u,t+\alpha)}  \circ \boldsymbol{\hat{\Gamma}}_{\epsilon}^{(u,t)\mapsto(u,t)} \right) \left( \boldsymbol{X}_{\T_{[u,t]}}\right) = \left( \boldsymbol{\hat{\Gamma}}_{\epsilon}^{(u,t+\alpha)\mapsto(u,t+\alpha)}  \circ  \boldsymbol{\Gamma}_{\alpha}^{(u,t)\mapsto(u,t+\alpha)} \right) \left( \boldsymbol{X}_{\T_{[u,t]}}\right) 
\end{align}
for a fixed $\boldsymbol{X}_{\T_{u,t}}\in \boldsymbol{\Lambda}(\R , \T)$, which brings forth
\begin{align}
\left(  \boldsymbol{\Gamma}_{\alpha} \circ \boldsymbol{\hat{\Gamma}}_{\epsilon} \right) = \left( \boldsymbol{\hat{\Gamma}}_{\epsilon}  \circ  \boldsymbol{\Gamma}_{\alpha} \right) 
\end{align}
as the commutativity relation.}
\end{ex}

\begin{ex}
\emph{If $\boldsymbol{\Lambda}(\M \,, \T)$ is completely metrizable, we can define continuity of an endomorphism on $\boldsymbol{\Lambda}(\M \,, \T)$. Accordingly, let
\begin{align}
\mathcal{D}^{'}: \, \boldsymbol{\Lambda}(\M \,, \T) \times \boldsymbol{\Lambda}(\M \,, \T) \rightarrow \R \hspace{0.1in} \text{and} \hspace{0.1in}
\mathcal{D}^{''}: \, \boldsymbol{\Lambda}(\M \,, \T) \times \boldsymbol{\Lambda}(\M \,, \T) \rightarrow \R
\end{align}
be distance-metrics satisfying the requirements. Here, $\mathcal{D}^{'}$ and $\mathcal{D}^{''}$ can be the same. A path transformation $\boldsymbol{\Gamma}^{(p,r)\mapsto(s,t)}$ is continuous at $\boldsymbol{X}_{\T_{[p,r]}}\in\boldsymbol{\Lambda}(\M \,, \T)$ -- with respect to $\mathcal{D}^{'}$ and $\mathcal{D}^{''}$ -- if for every $\epsilon > 0$, there exists a $\delta > 0$ such that
\begin{align}
\mathcal{D}^{'}\left(\, \boldsymbol{X}_{\T_{[p,r]}}, \boldsymbol{X}^{*}_{\T_{[p,r]}}\, \right) < \delta \, \Rightarrow \, \mathcal{D}^{''}\left(\, \boldsymbol{\Gamma}^{(p,r)\mapsto(s,t)}\left(\boldsymbol{X}_{\T_{[p,r]}}\right), \boldsymbol{\Gamma}^{(p,r)\mapsto(s,t)}\left(\boldsymbol{X}^{*}_{\T_{[p,r]}}\,\right) \right) <\epsilon 
\end{align}
for all $\boldsymbol{X}^{*}_{\T_{[p,r]}}\in\boldsymbol{\Lambda}(\M \,, \T)$. We can further strengthen continuity; for instance, a path transformation $\boldsymbol{\Gamma}^{(p,r)\mapsto(s,t)}$ is H\"older-continuous with exponent $\alpha\in\R$, if there exists a constant $K$ such that
\begin{align}
\mathcal{D}^{''}\left(\, \boldsymbol{\Gamma}^{(p,r)\mapsto(s,t)}\left(\boldsymbol{X}_{\T_{[p,r]}}\right),\, \boldsymbol{\Gamma}^{(p,r)\mapsto(s,t)}\left(\boldsymbol{X}^{*}_{\T_{[p,r]}}\,\right) \right) \leq K \left(\mathcal{D}^{'}\left(\, \boldsymbol{X}_{\T_{[p,r]}}, \, \boldsymbol{X}^{*}_{\T_{[p,r]}}\, \right)\right)^{\alpha} 
\end{align}
for all $\boldsymbol{X}_{\T_{[p,r]}},\boldsymbol{X}^{*}_{\T_{[p,r]}}\in\boldsymbol{\Lambda}(\M \,, \T)$. When $\alpha=1$, this is Lipschitz-continuity of a path transformation.}
\end{ex}
\begin{ex}
\emph{
Let $\boldsymbol{X}_{\T_{[u,t]}}$ be an $\R$-valued \cadlag path with a jump of $\Delta$ at $\tau\in(u,t)$. Then, 
\begin{equation}
    \boldsymbol{\Gamma}^{(u,t)\mapsto(u,t)}\left(\boldsymbol{X}_{[u,t]}\right) =
\begin{cases}
			X_s, & \text{if}\ s\in [u,\tau) \\
      X_{s} - \Delta, & \text{if}\ s\in[\tau,t],
\end{cases}
\end{equation}
defines a discontinuity-removing path transformation.}
\end{ex}
\begin{ex}
\emph{The family of monotonic functionals studied in \cite{Menguturk 2025} are path transformations on unions of Skorokhod spaces of \cadlag paths. More specifically, they are projection path transformations from $\R$-valued paths to $\R$.}
\end{ex}

%REFERENCES

\end{document}